\newcommand{\Rmnum}[1]{\expandafter\@slowromancap\romannumeral #1@}
\newtheorem{thm}{Theorem}
\newtheorem{defn}{Definition}
\newtheorem{lem}{Lemma}
\newtheorem{claim}{Claim}
\newtheorem{cor}[thm]{Corollary}
\newtheorem{remark}{Remark}
\newcommand{\namedref}[2]{\hyperref[#2]{#1~\ref*{#2}}}
\newcommand{\EQ}{\text{EQ}}
\newcommand{\PS}{\text{PS}}
\newcommand{\AS}{\text{AS}}
\newcommand{\image}{\text{image}}
\newcommand{\Sectionref}[1]{\namedref{Section}{sec:#1}}
\newcommand{\Appendixref}[1]{\namedref{Appendix}{app:#1}}
\newcommand{\Theoremref}[1]{\namedref{Theorem}{thm:#1}}
\newcommand{\Corollaryref}[1]{\namedref{Corollary}{cor:#1}}
\newcommand{\Definitionref}[1]{\namedref{Definition}{defn:#1}}
\newcommand{\Lemmaref}[1]{\namedref{Lemma}{lem:#1}}
\newcommand{\Remarkref}[1]{\namedref{Remark}{remark:#1}}
\newcommand{\Claimref}[1]{\namedref{Claim}{claim:#1}}
\newcommand{\Figureref}[1]{\namedref{Figure}{fig:#1}}
\newcommand{\Pageref}[1]{\hyperref[#1]{page~\pageref*{#1}}}
\definecolor{darkred}{rgb}{0.5, 0, 0} 
\definecolor{darkgreen}{rgb}{0, 0.5, 0} 
\definecolor{darkblue}{rgb}{0,0,0.5} 
\newcommand{\X}{\ensuremath{\mathcal{X}}\xspace}
\newcommand{\Y}{\ensuremath{\mathcal{Y}}\xspace}
\newcommand{\Z}{\ensuremath{\mathcal{Z}}\xspace}
\renewcommand{\paragraph}[1]{\smallskip\noindent{\bf #1}~}
\begin{document}

\title{Secure Computation of Randomized Functions}
\author{\IEEEauthorblockN{Deepesh Data}
\IEEEauthorblockA{School of Technology \& Computer Science\\
Tata Institute of Fundamental Research,
Mumbai, India\\
Email: deepeshd@tifr.res.in}} 
\maketitle

\pagestyle{plain}

\begin{abstract}
Two user secure computation of randomized functions is considered, where only one user computes the output. Both the users are semi-honest; and computation is such that no user learns any additional information about the other user's input and output other than what cannot be inferred from its own input and output. First we consider a scenario, where privacy conditions are against both the users. In perfect security setting Kilian gave a characterization of securely computable {\em randomized} functions in \cite{Kilian00}, and we provide rate-optimal protocols for such functions. We prove that the same characterization holds in asymptotic security setting as well and give a rate-optimal protocol. In another scenario, where privacy condition is only against the user who is not computing the function, we provide rate-optimal protocols. For perfect security in both the scenarios, our results are in terms of chromatic entropies of different graphs. In asymptotic security setting, we get single-letter expressions of rates in both the scenarios.
\end{abstract}

\section{Introduction}
Secure computation allows mutually distrustful users to collaborate and compute functions of their private data without revealing any additional information about their data to other users besides the function value. 
It has many real life applications such as electronic voting, privacy-preserving data mining, privacy-preserving machine learning, to name a few \cite{CramerDaNi15}. Essentially, we can cast any distributed computation task in a secure computation framework, where input data is distributed among many users, and they want to jointly compute some function of their data.
The study of two user secure computation was initiated in \cite{ShamirRiAd79}, 
\cite{Rabin81}, \cite{Yao82}, \cite{Yao86}, etc., where secure computation was shown to be feasible under cryptographic assumptions when the users are computationally bounded. In the information-theoretic setting, a combinatorial characterization of securely computable {\em deterministic} functions is given in \cite{Kushilevitz92,Beaver89a} for perfect security setting and in \cite{MajiPrRo09} for statistical security setting.
An alternative characterization (in perfect security setting) using common randomness generated by determinstic protocols is given in \cite{NarayanTyWa15}, which also provides lower bounds on the communication complexity.
Kilian~\cite{Kilian00}, among many other things, gave a characterization of securely computable {\em randomized} functions for the special case where only one user computes the output. The fundamental problem of characterizing securely computable {\em randomized} functions when both the users compute the outputs remains open.

Information theoretic tools have been used in deriving communication lower bounds in 3-user secure computation \cite{DataPrPr15}.
Another line of work, where communication among users is public and privacy of function value is against an eavesdropper, characterizes securely computable functions in terms of information-theoretic quantities \cite{TyagiNaGu11, Tyagi12}. A full characterization of joint distributions that can be securely sampled among two users is given in \cite{WangIs11}.

All these results are for {\em semi-honest} users. A semi-honest user follow the protocol honestly but tries to learn additional information about other user's input and output that cannot be inferred by its own input and output. 
In this paper we focus on two user secure computation of randomized functions against semi-honest users, where only one user computes the output. The secure computation problem is specified by a pair $(p_{XY},p_{Z|XY})$, where $p_{XY}$ is the input distribution from which Alice (user-1) and Bob (user-2) get their inputs $X$ and $Y$, respectively, and $p_{Z|XY}$ specifies the output distribution. Our results can be summarized as follows:

In \Sectionref{twoSidedPrivacy} we consider a scenario where privacy is against both the users. In perfect security setting, a characterization of securely computable {\em randomized} functions was given by Kilian~\cite{Kilian00}. We give rate-optimal protocols for such functions. The rates are in terms of {\em chromatic entropy} of a graph.
We prove that the same characterization holds in asymptotic security setting as well and give a rate-optimal protocol. Our achievability follows the Slepian-Wolf coding scheme \cite{SlepianWo73}. In order to use their scheme we modify our secure computation problem $(p_{XY},p_{Z|XY})$, such that the modified problem preserves the functionality, and when applied with their achievable scheme, does not leak any privacy.
In \Sectionref{oneSidedPrivacy} we consider a scenario where privacy is only against Alice. We give rate-optimal protocols for both the perfect security and the asymptotic security settings. In perfect security setting, the rates are in terms of chromatic entropy of a {\em different} graph, and in asymptotic security the rate is equal to {\em conditional graph entropy} of this graph.

\section{Preliminaries}\label{sec:prelims}
\paragraph{Protocol.} Alice and Bob get $X$ and $Y$ as inputs, respectively; then they engage in a {\em protocol}; and in the end Bob outputs according to $p_{Z|XY}$. No external eavesdropper/adversary is assumed. The link between the users is assumed to be noiseless and private. The users have access to private randomness. They communicate over multiple rounds, and in one round only one user sends message to the other. A protocol $\Pi$ is a set of triples $(\Pi_A,\Pi_B,\Pi_B^{\text{out}})$, where $\Pi_A$ and $\Pi_B$ are next message functions for Alice and Bob, respectively, and $\Pi_B^{\text{out}}$ is the output function for Bob. More precisely, at any round $t$ if it is Alice's turn to send a message to Bob, this new message is determined by the function $\Pi_A$ that takes Alice's input, her private randomness, and all the messages she has exchanged with Bob so far, and outputs a message. Similar is the case for Bob. In the end, Bob outputs according to the function $\Pi_B^{\text{out}}$ that takes Bob's input, his private randomness, and all the messages he has exchanged so far with Alice. We require that the messages in each round $t$ belong to a prefix-free code $\mathcal{C}_t$, which can be determined by all the messages exchanged so far. This is a natural requirement and allows us to lower-bound the expected length of the transcript by its entropy.

\paragraph{Notation.} For any $n\in\mathbb{N}$, we define $[n]:=\{1,2,\hdots,n\}$. We abbreviate {\em independent and identically distributed} by {\em i.i.d.}. 
For a random variable $U$, we denote by $U^n$ the $n$-length vector $(U_1,U_2,\hdots,U_n)$, where $U_i$'s are i.i.d. and distributed according to $U$. 
For $(U_1,U_2,\hdots,U_n)$ and for some $i\in[n]$, we define $U_{-i}$ to denote $(U_1,U_2,\hdots,U_{i-1},U_{i+1},\hdots,U_n)$.

\section{Privacy against Alice \& Bob, and Bob computes}\label{sec:twoSidedPrivacy}
\subsection{Perfect Security}\label{subsec:ps_twoSidedPrivacy}
We consider a two user secure computation problem specified by $(p_{XY},p_{Z|XY})$. 
Alice gets $X$ as her input; Bob gets $Y$ as his input; and Bob wants to 
securely compute an output $Z$, which should be distributed according to $p_{Z|XY}$; see \Figureref{ps_twoSidedPrivacy}. 
$X,Y,Z$ take values in finite alphabets $\X,\Y,\Z$, respectively. 
We assume that the p.m.f.s $p_X$, $p_Y$, and $p_Z$ have full support. 
The users compute via interactive communication over many rounds. Let $M$ denote the transcript by the end of their communication, which is concatenation of all the messages exchanged between them. Bob's output $Z$ depends on $X$ only via $(Y,M)$, i.e., $Z-(Y,M)-X$ is a Markov chain.
Here privacy is against a single user, which means that Alice should not learn anything from the protocol execution about $(Y,Z)$ other than what is revealed by her input, and similarly Bob should not learn anything from the protocol execution about $X$ other than what is revealed by his input $Y$ and output $Z$.
We say that a protocol for $(p_{XY},p_{Z|XY})$ is perfectly secure, if it computes the output
with zero error and it is perfectly private against any single user. A formal definition of 
a secure protocol is given below.

\begin{defn}\label{defn:ps_twoSidedPrivacy}
A protocol $\Pi$ for a secure computation problem $(p_{XY},p_{Z|XY})$ is perfectly secure, if 
the output $Z$ is distributed according to $p_{Z|XY}$, and it satisfies the following conditions:
\begin{align}
M-X-(Y,Z), \label{eq:ps_2_privacy-alice}\\
M-(Y,Z)-X. \label{eq:ps_2_privacy-bob}
\end{align}
\end{defn}
The Markov chain conditions \eqref{eq:ps_2_privacy-alice} and \eqref{eq:ps_2_privacy-bob} correspond to privacy against Alice and Bob, respectively.
It turns out that not every pair $(p_{XY},p_{Z|XY})$ can be computed securely. 
Kilian~\cite{Kilian00} gave a characterization of securely computable $(p_{XY},p_{Z|XY})$ with $p_{XY}$ having full support. Essentially the same characterization holds for general $(p_{XY},p_{Z|XY})$ as well, and we prove it in our language in \Theoremref{characterization}. It will be useful in understanding rest of the paper.
\begin{figure}
\centering
\begin{tikzpicture}
\draw (0,0) rectangle node {A} +(1,1); \draw [->,thick] (-1,0.5) -- (0,0.5); \node at (-0.5,0.7    5) {$X$};
\draw (4,0) rectangle node {B} +(1,1); \draw [<-,thick] (5,0.5) -- (6,0.5); \node at (5.5,0.75)     {$Y$};
\draw [->,thick] (4.5,0) -- (4.5,-0.75); \node [right] at (4.85,-0.4) {$Z\sim p_{Z|XY}$};
\draw [<->,thick] (1,0.5) -- (4,0.5); \node at (2.5,0.75) {$M$};
\end{tikzpicture}
\caption{A two user secure computation problem: Alice and Bob gets $X$ and $Y$ as inputs, respectively, and Bob wants to compute an output $Z$ which should be distributed according to $p_{Z|XY}$. Privacy is against the users themselves, which means that Alice does not learn any additional information about $Y,Z$ other than what is revealed by $X$, and Bob does not learn any additional information about $X$ other than what is revealed by $Y,Z$. There are no external adversaries/eavesdroppers. Both the users can talk to each other over multiple rounds over a noiseless and bidirectional link.}
\label{fig:ps_twoSidedPrivacy}
\end{figure}
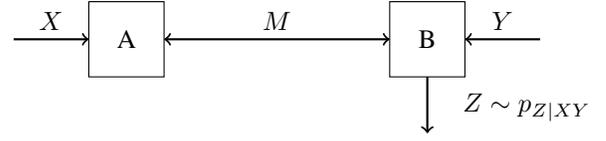
We need some definitions first.
\begin{defn}\label{defn:relation-tilde}
For any two distinct elements $x,x'$ of $\X$, we say that $x\sim x'$, if there exists $y\in \Y$
and $z\in \Z$ such that $p_{XY}(x,y)>0,p_{XY}(x',y)>0$ and $p_{Z|XY}(z|x,y)>0,p_{Z|XY}(z|x',y)>0$. 
\end{defn}
\begin{defn}\label{defn:equivalence-relation}
For any two distinct elements $x,x'$ of $\X$, we say that $x\equiv x'$, if there exists a sequence $x=x_1,x_2,\hdots,x_{l-1},x_l=x'$ for some integer $l$, where $x_i\sim x_{i+1}$ for every $i \in \{1,2,\hdots,l-1\}$.
\end{defn}
It can be verified easily that the above defined relation $\equiv$ is an equivalence relation.
We know that an equivalence relation partitions the whole space into equivalence classes.
Suppose the relation $\equiv$ partitions the space $\X$ as  
$\X=\mathcal{X}_1\biguplus \mathcal{X}_2\biguplus\hdots\biguplus\mathcal{X}_k$, 
where each $\mathcal{X}_i$ is an equivalence class.

We say that $C\times D$, where $C\subseteq \X, D\subseteq \Y$, is {\em column monochromatic}, if 
for every $y\in D$, if there exists distinct $x,x'\in C$ such that $p_{XY}(x,y),p_{XY}(x',y)>0$, then $p_{Z|XY}(z|x,y)=p_{Z|XY}(z|x',y)$, $\forall z\in\Z$.

\begin{thm}\label{thm:characterization}
Let $\X=\mathcal{X}_1\biguplus\mathcal{X}_2\biguplus\hdots\biguplus\mathcal{X}_k$ be the partition induced by the equivalence relation $\equiv$.
Then, $(p_{XY},p_{Z|XY})$ is computable with perfect security if and only if 
each $\mathcal{X}_i\times \Y$, $i\in\{1,2,\hdots,k\}$, is column monochromatic.
\end{thm}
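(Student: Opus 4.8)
The plan is to treat the two directions separately, relying on three facts that hold for any protocol and any perfectly secure realization, all understood on positive-probability conditioning events. Privacy against Alice \eqref{eq:ps_2_privacy-alice} gives $p_{M\mid XYZ}(m\mid x,y,z)=p_{M\mid X}(m\mid x)$, and after marginalizing out $Z$ also $p_{M\mid XY}(m\mid x,y)=p_{M\mid X}(m\mid x)$. Privacy against Bob \eqref{eq:ps_2_privacy-bob} gives $p_{M\mid XYZ}(m\mid x,y,z)=p_{M\mid YZ}(m\mid y,z)$. Finally, the structural Markov chain $Z-(Y,M)-X$ noted for every protocol gives $p_{Z\mid XYM}(z\mid x,y,m)=p_{Z\mid YM}(z\mid y,m)=:\psi(z\mid y,m)$, so Bob's output law is a fixed function of his view $(y,m)$.

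For necessity, suppose a perfectly secure protocol exists. First I would show that $p_{M\mid X}(\cdot\mid x)$ is constant on each equivalence class $\mathcal X_i$. It suffices to prove this across a single link $x\sim x'$: the definition of $\sim$ supplies a pair $(y_0,z_0)$ with $p_{XYZ}(x,y_0,z_0)>0$ and $p_{XYZ}(x',y_0,z_0)>0$, and chaining the Alice- and Bob-privacy identities through this common triple yields $p_{M\mid X}(m\mid x)=p_{M\mid YZ}(m\mid y_0,z_0)=p_{M\mid X}(m\mid x')$ for every $m$; transitivity of equality then propagates this along the chain defining $\equiv$. Writing $\mu:=p_{M\mid X}(\cdot\mid x)=p_{M\mid X}(\cdot\mid x')$ whenever $x\equiv x'$, I would next fix any $y$ with $p_{XY}(x,y),p_{XY}(x',y)>0$ and expand $p_{Z\mid XY}(z\mid x,y)=\sum_m p_{M\mid XY}(m\mid x,y)\,\psi(z\mid y,m)=\sum_m\mu(m)\,\psi(z\mid y,m)$, where the Alice-privacy identity replaces $p_{M\mid XY}(\cdot\mid x,y)$ by $\mu$. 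The identical expansion for $x'$ produces the same value, so $p_{Z\mid XY}(\cdot\mid x,y)=p_{Z\mid XY}(\cdot\mid x',y)$, which is exactly column monochromaticity of $\mathcal X_i\times\Y$.

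For sufficiency, assume every $\mathcal X_i\times\Y$ is column monochromatic and exhibit an explicit one-round protocol: Alice sends $M=C(X)$, the index of the class containing her input, and nothing more. Column monochromaticity makes $q_i(z\mid y):=p_{Z\mid XY}(z\mid x,y)$ well defined for any representative $x\in\mathcal X_i$ with $p_{XY}(x,y)>0$, and Bob outputs $Z\sim q_{C(X)}(\cdot\mid y)$ from fresh private randomness. Correctness is immediate because Alice's own input is a legal representative. Privacy against Alice \eqref{eq:ps_2_privacy-alice} holds trivially, since $M=C(X)$ is a deterministic function of $X$ and hence constant given $X$. For privacy against Bob \eqref{eq:ps_2_privacy-bob}, the crucial observation is that any two inputs compatible with a common $(y,z)$ of positive probability are related by $\sim$ and therefore lie in the same class; thus $C(X)$ is a deterministic function of $(Y,Z)$ on the support, which is precisely the Markov chain $M-(Y,Z)-X$.

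The main work, and the step I expect to require the most care, is the chaining in the necessity direction: the hypothesis $\sim$ and the column-monochromatic conclusion each quantify only over a single shared $y$ (respectively a shared $(y,z)$), so the argument must first lift the \emph{local} agreement at one link into a \emph{global} invariant, namely that $p_{M\mid X}$ is constant on a class, before transporting it back to the output law at an arbitrary common column $y$. I would also take care that every conditional identity is asserted only on positive-probability events, and that $q_i(\cdot\mid y)$ and $\psi(\cdot\mid y,m)$ are invoked only where defined; off the support the corresponding messages and outputs never occur, so they may be fixed arbitrarily without disturbing either correctness or privacy.
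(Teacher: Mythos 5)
Your proof is correct and follows essentially the same route as the paper: the same chaining of the two privacy conditions along $\sim$-links to show that $p_{M\mid X}(\cdot\mid x)$ is constant on each equivalence class, and the same one-round ``send the class index'' protocol for sufficiency (your justification of privacy against Bob is exactly the observation the paper records in \Remarkref{bob-always-learn}). The only cosmetic difference is that you obtain column monochromaticity by summing $p(m\mid x)\,p(z\mid y,m)$ over $m$, whereas the paper compares two expansions of $p(m,z\mid x,y)$ for a fixed transcript $m$; both rest on the identical privacy identities and the Markov chain $Z-(M,Y)-X$.
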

\begin{proof}
We first prove the converse, i.e., if $(p_{XY},p_{Z|XY})$ is securely computable, then each $\mathcal{X}_i\times \Y$ is column monochromatic. For the other direction we give a secure protocol.

$\Longrightarrow$:
Fix a secure protocol $\Pi$. Let $p(x,y,m,z)$ be the resulting distribution.
Consider an equivalence class $\mathcal{X}_i$. 
First we prove that $p(m|x)=p(m|x')$ for every transcript $m$ and every $x,x'\in\mathcal{X}_i$.
Since $\mathcal{X}_i$ is an equivalence class of $\equiv$, we have $x\equiv x'$, 
and by the definition of $x\equiv x'$, there exists $x=x_1,x_2,\hdots,x_{l-1},x_l=x'$ 
such that $x_i\sim x_{i+1}$ for every $0\leq i \leq l-1$.
Consider $x_i,x_{i+1}$ for some $i$ in this sequence. Since $x_i\sim x_{i+1}$, there exists $y,z$ such that $p(x_i,y),p(x_{i+1},y),p(z|x_i,y),p(z|x_{i+1},y)>0$. Fix a transript $m$ and consider the following:
\begin{align*}
p(m|x_i) &= p(m|x_i,y,z) \\
&= p(m|x_{i+1},y,z) \\
&= p(m|x_{i+1}),
\end{align*}
where the first and third equalities follow from privacy against Alice \eqref{eq:ps_2_privacy-alice}, and the 
second equality follows from privacy against Bob \eqref{eq:ps_2_privacy-bob}.
Since the above argument holds for every $i\in\{1,2,\hdots,k-1\}$, we have $p(m|x)=p(m|x')$ for every $x,x'\in\mathcal{X}_i$.

Now take any $y\in\Y$. We prove that if there exists $x,x'\in\mathcal{X}_i$ such that 
$p_{XY}(x,y),p_{XY}(x',y)>0$, then $p_{Z|XY}(z|x,y)=p_{Z|XY}(z|x',y)$ for every $z\in\Z$.
Take any transcript $m$ such that $p(m|x),p(m|x')>0$, and take any $z\in\Z$. Consider $(x,y,z,m)$ and expand $p(m,z|x,y)$ as follows:
\begin{align}
p(m,z|x,y) &= p(m|x,y)p(z|x,y,m) \nonumber \\
&= p(m|x)p(z|m,y). \label{eq:ps_2_one-way}
\end{align}
In \eqref{eq:ps_2_one-way} we used privacy against Alice to write $p(m|x,y)=p(m|x)$, and 
the Markov chain $Z-(M,Y)-X$ to write $p(z|x,y,m)=p(z|y,m)$. Note that the Markov chain $Z-(M,Y)-X$ holds, because Bob outputs $Z$ by looking at his input $Y$ and the transcript $M$. 
We can expand $p(m,z|x,y)$ in another way:
\begin{align}
p(m,z|x,y) &= p(z|x,y)p(m|x,y,z) \nonumber \\
&= p(z|x,y)p(m|x). \label{eq:ps_2_another-way}
\end{align}
In \eqref{eq:ps_2_another-way} we used privacy against Alice to write $p(m|x,y,z)=p(m|x)$. 
Comparing \eqref{eq:ps_2_one-way} and \eqref{eq:ps_2_another-way} we get
\begin{align}
p(z|x,y)=p(z|m,y). \label{eq:ps_2_first-compare}
\end{align}
Running the same arguments with $(x',y,z,m)$ we get
\begin{align}
p(z|x',y)=p(z|m,y). \label{eq:ps_2_second-compare}
\end{align}
Comparing \eqref{eq:ps_2_first-compare} and \eqref{eq:ps_2_second-compare} gives 
$p(z|x,y)=p(z|x',y)$. Since the protocol is correct, i.e., $p(z|x,y)=p_{Z|XY}(z|x,y)$, we have our desired result that $p_{Z|XY}(z|x,y)=p_{Z|XY}(z|x',y)$. \\

$\Longleftarrow$: If $\mathcal{X}_i\times\Y$, $i\in\{1,2,\hdots,k\}$, is column monochromatic, 
then we give a simple protocol to securely compute $(p_{XY},p_{Z|XY})$ below. It is easy to check that the protocol is perfectly secure.

{\bf Protocol:} Alice sends the index $i$ such that $x\in\mathcal{X}_i$. Bob then samples $Z$ according to $p_i(z|y)$, where $p_i(z|y):=p_{Z|XY}(z|x,y)$ for some $x$ that satisfies $p_{XY}(x,y)>0$. 
\end{proof}

The above protocol may not be optimal in terms of communication complexity. 
Consider two equivalence classes $\mathcal{X}_i$ and $\mathcal{X}_j$. Let $\mathcal{Y}_i := \{y\in\Y : \exists x\in\mathcal{X}_i \text{ s.t. } p_{XY}(x,y)>0\}$; $\mathcal{Y}_j$ is defined similarly. Suppose $\mathcal{Y}_i\bigcap \mathcal{Y}_j=\phi$.
Then, Alice can send the same message whether $x\in\mathcal{X}_i$ or $x\in\mathcal{X}_j$, 
and still Bob will be able to correctly compute $p_{Z|XY}$.

For $i=1,2,\hdots,k$, define $\mathcal{M}_i$ to be the set of all possible transcripts $m$ when Alice's input is in $\mathcal{X}_i$.
\begin{claim}\label{claim:msgs_different-classes}
For some $i,j\in[k],i\neq j$, if there exists $x\in\mathcal{X}_i$ and $x'\in\mathcal{X}_j$ such that $p_{XY}(x,y)>0,p_{XY}(x',y)>0$ for some $y\in\Y$, then $\mathcal{M}_i\bigcap\mathcal{M}_j=\phi$.
\end{claim}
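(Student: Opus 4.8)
The plan is to argue by contradiction: assume some transcript $m$ lies in $\mathcal{M}_i\cap\mathcal{M}_j$ and derive a contradiction with the fact that $x$ and $x'$ sit in distinct equivalence classes. The engine of the argument is exactly the computation already carried out in the converse of \Theoremref{characterization}, so I expect to reuse \eqref{eq:ps_2_first-compare} almost verbatim; the only genuinely new observation is structural.

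First I would translate the hypothesis that $\mathcal{X}_i$ and $\mathcal{X}_j$ are distinct classes into a statement about output supports. Since $x\in\mathcal{X}_i$ and $x'\in\mathcal{X}_j$ with $i\neq j$, we have $x\not\equiv x'$ and in particular $x\not\sim x'$. Because the given $y$ already satisfies $p_{XY}(x,y)>0$ and $p_{XY}(x',y)>0$, \Definitionref{relation-tilde} forces the two conditional output supports to be disjoint: there is no $z\in\Z$ with both $p_{Z|XY}(z|x,y)>0$ and $p_{Z|XY}(z|x',y)>0$. This disjointness is the structural fact I will contradict, and recognizing that ``different classes together with a shared column $y$ forces disjoint output supports'' is the main conceptual step.

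Next I would promote $m\in\mathcal{M}_i\cap\mathcal{M}_j$ to positivity for the specific inputs $x$ and $x'$. By definition $m\in\mathcal{M}_i$ means $p(m|\tilde x)>0$ for some $\tilde x\in\mathcal{X}_i$; the within-class identity that $p(m|\cdot)$ is constant on $\mathcal{X}_i$ --- established at the start of the converse of \Theoremref{characterization} from \eqref{eq:ps_2_privacy-alice} and \eqref{eq:ps_2_privacy-bob} --- then gives $p(m|x)>0$, and symmetrically $p(m|x')>0$. With $p(m|x)>0$ and $p_{XY}(x,y)>0$ in hand, the two expansions of $p(m,z|x,y)$ used in the theorem yield $p(z|x,y)=p(z|m,y)$ for all $z$, which is exactly \eqref{eq:ps_2_first-compare}; running the same computation with $x'$ in place of $x$ gives $p(z|x',y)=p(z|m,y)$.

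Finally I would close the loop. Pick any $z^\star$ with $p_{Z|XY}(z^\star|x,y)>0$, which exists since $p_{Z|XY}(\cdot|x,y)$ is a probability vector. Correctness of the protocol gives $p(z^\star|x,y)=p_{Z|XY}(z^\star|x,y)>0$, hence $p(z^\star|m,y)>0$, and therefore $p(z^\star|x',y)>0$; thus $z^\star$ lies in both output supports, contradicting the disjointness established above. Consequently no such $m$ can exist and $\mathcal{M}_i\cap\mathcal{M}_j=\phi$. I do not anticipate any real obstacle beyond the support-disjointness observation, since every analytic ingredient is borrowed unchanged from the converse direction of \Theoremref{characterization}.
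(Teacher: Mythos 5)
Your proposal is correct and follows essentially the same route as the paper's proof: the same contradiction setup, the same observation that $x\not\sim x'$ forces disjoint output supports in column $y$, the same promotion of $m\in\mathcal{M}_i\cap\mathcal{M}_j$ to $p(m|x),p(m|x')>0$ via the within-class constancy of $p(m|\cdot)$, and the same reuse of the two expansions of $p(m,z|x,y)$ to equate both conditionals to $p(z|m,y)$. The only difference is that you spell out the intermediate justifications (support disjointness from \Definitionref{relation-tilde}, positivity via correctness) that the paper leaves implicit.
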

\begin{proof}
We prove the claim by contradiction.
Suppose there exists $i,j\in[k],i\neq j$, and $x\in\mathcal{X}_i$, $x'\in\mathcal{X}_j$ such 
that $p_{XY}(x,y)>0,p_{XY}(x',y)>0$ for some $y\in\Y$. Since $p_{XY}(x,y)>0$, there is a 
$z\in\Z$ with $p_{Z|XY}(z|x,y)>0$. Since $x$ and $x'$ are not in the same equivalence class, 
we have $p_{Z|XY}(z|x',y)=0$.

For the sake of contradiction, suppose $\mathcal{M}_i\bigcap\mathcal{M}_j\neq\phi$. 
Let $m\in\mathcal{M}_i\bigcap\mathcal{M}_j$. It follows from the proof of 
\Theoremref{characterization} that $p(m|x)>0$ and $p(m|x')>0$. Now consider $(x,y,z,m)$ and by 
expanding $p(m,z|x,y)$ in two different ways (similar to what we did in the proof of 
\Theoremref{characterization}) we get $p(z|x,y)=p(z|m,y)$. 
Applying the same arguments with $(x',y,z,m)$ gives $p(z|x',y)=p(z|m,y)$, 
which leads to a contradiction, because by assumption $p(z|x,y)>0$ but $p(z|x',y)=0$.
\end{proof}
From the proof of Theorem \ref{thm:characterization} we have that the transcript $M$ is conditionally independent of the input $X$ conditioned on the equivalence class to which $X$ belongs. This, together with \Claimref{msgs_different-classes} suggests replacing all the elements in $\mathcal{X}_i, i\in[k]$, by a single element $x_i$, which induces a new random variable, which we denote by $X_{\EQ}$. Note that $X_{\EQ}$ takes values in the set $\X_{\EQ}:=\{x_1,x_2,\hdots,x_k\}$, where $x_i$ is the representative of the equivalence class $\mathcal{X}_i$.
Thus we can define an equivalent problem ($p_{X_{\EQ},Y},p_{Z|X_{\EQ},Y}$) as follows:
\begin{itemize}
\item Define $p_{X_{\EQ},Y}(x_i,y):=\sum_{x\in\mathcal{X}_i}p_{XY}(x,y)$ for every $(x_i,y)\in(\X_{\EQ}\times\Y)$. 
\item For $(x_i,y)\in(\X_{\EQ}\times\Y)$, if there exists $x\in\X_i$ s.t. $p_{XY}(x,y)>0$, then define $p_{Z|X_{\EQ},Y}(z|x_i,y) :=p_{Z|XY}(z|x,y)$, for every $z\in\Z$. Otherwise define $p_{Z|X_{\EQ},Y}(z|x_i,y) :=0$, for every $z\in\Z$.
\end{itemize}
Note that the above definition is for $(p_{XY},p_{Z|XY})$ which are securely computable as per \Theoremref{characterization}.
\begin{remark}\label{remark:bob-always-learn}
{\em Observe that in any secure protocol for $(p_{XY},p_{Z|XY})$, if exists, Bob will {\em always} learn which equivalence class $\X_i$ Alice's input lies in. This is because $\forall (y,z)\in\Y\times\Z$ such that $p_{YZ}(y,z)>0$, there is a unique $x\in\X_{\EQ}$ s.t. $p_{X_{\EQ},Y}(x,y)>0$ and $p_{Z|X_{\EQ},Y}(z|x,y)>0$.}
\end{remark}
To give a communication optimal protocol, we define the following graph $G_{\EQ}=(V_{\EQ},E_{\EQ})$:
\begin{itemize}
\item $V_{\EQ}=\X_{\EQ}=\{x_1,x_2,\hdots,x_k\}$.
\item $E_{\EQ}=\{\{x_i,x_j\}: \text{ there exists } (y,z)\in\Y\times\Z \text{ such that } p_{X_{\EQ},Y}(x_i,y)>0, \quad p_{X_{\EQ},Y}(x_j,y) >0 \text{ and }p_{Z|X_{\EQ},Y}(z|x_i,y)\ \neq \ p_{Z|X_{\EQ},Y}(z|x_j,y)\}$
\end{itemize}
We say that a coloring $c:V\to \mathbb{N}$ of the vertices of a graph $G=(V,E)$ is {\em proper} if 
both the vertices of any edge $\{u,v\}$ have distinct colors, i.e., $c(u)\neq c(v)$.
\begin{claim}\label{claim:color-equiv-protocol}
Every proper coloring of the vertices of $G_{\EQ}$ corresponds to a secure protocol for 
$(p_{X_{\EQ}Y},p_{Z|X_{\EQ}Y})$, and every secure protocol for $(p_{X_{\EQ}Y},p_{Z|X_{\EQ}Y})$ corresponds to a collection of
proper colorings of the vertices of $G_{\EQ}$.
\end{claim}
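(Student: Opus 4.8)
The plan is to prove the two directions of the correspondence separately, in both cases leveraging the structural facts already established: that $X_{\EQ}$ is a deterministic function of $(Y,Z)$ (\Remarkref{bob-always-learn}) and the two-way expansion of $p(m,z|x,y)$ used in the proof of \Theoremref{characterization}.

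For the direction from a coloring to a protocol, given a proper coloring $c$ of $G_{\EQ}$, I would let Alice send the single message $c(x_i)$, where $x_i$ is the unique class containing her input. On receiving a color $\gamma$ together with his input $y$, Bob considers the set $\{x_i : c(x_i)=\gamma,\ p_{X_{\EQ},Y}(x_i,y)>0\}$ and samples $Z$ from $p_{Z|X_{\EQ},Y}(\cdot\,|\,x_i,y)$ for any such $x_i$. The key point is that this is unambiguous: if two distinct vertices $x_i,x_j$ received the same color and both satisfied $p_{X_{\EQ},Y}(\cdot,y)>0$ with differing output distributions at some $z$, then $\{x_i,x_j\}\in E_{\EQ}$, contradicting properness of $c$; hence all candidate vertices induce the same conditional law and Bob computes correctly. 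Privacy against Alice is immediate, since $M=c(X_{\EQ})$ is a deterministic function of Alice's input, so $M-X_{\EQ}-(Y,Z)$ holds trivially. Privacy against Bob follows because, by \Remarkref{bob-always-learn}, $X_{\EQ}$ is a deterministic function of $(Y,Z)$ whenever $p_{YZ}(y,z)>0$, so $M=c(X_{\EQ})$ is itself a function of $(Y,Z)$ and $M-(Y,Z)-X_{\EQ}$ holds trivially.

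For the direction from a protocol to colorings, I would start from a secure protocol $\Pi$ and, for each transcript $m$, set $S_m:=\{x_i\in\X_{\EQ}: p(m|x_i)>0\}$. The central claim is that each $S_m$ is an independent set of $G_{\EQ}$. To see this I would repeat the two-way expansion from the proof of \Theoremref{characterization}: if $x_i,x_j\in S_m$ and $\{x_i,x_j\}\in E_{\EQ}$ were witnessed by some $(y,z)$ with $p_{X_{\EQ},Y}(x_i,y),p_{X_{\EQ},Y}(x_j,y)>0$, then privacy against Alice (giving $p(m|x_i,y)=p(m|x_i)>0$) together with the Markov chain $Z-(M,Y)-X_{\EQ}$ would force $p(z|x_i,y)=p(z|m,y)=p(z|x_j,y)$, contradicting that the outputs differ at $z$. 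Once each $S_m$ is known to be independent, I would decompose $\Pi$ into proper colorings: since privacy against Alice makes the transcript conditionally independent of $Y$ given $X_{\EQ}$, the transcript is, in distribution, a function $M=f(X_{\EQ},R)$ of the input and randomness $R$ independent of $(Y,Z)$; fixing $R=r$ yields the map $x_i\mapsto f(x_i,r)$, which is a proper coloring, because any two vertices receiving the same color $m$ both lie in $S_m$ and are therefore non-adjacent. The protocol is thus realized as the distribution over $r$ of these proper colorings, giving the desired collection.

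I expect the main obstacle to be the reverse direction's reduction from a general interactive protocol to a well-defined coloring. The subtlety is that Alice's messages may depend on Bob's messages, so the transcript is not literally a function of Alice's input and her own randomness alone; the argument that fixing the randomness yields a single proper coloring rests on using privacy against Alice to establish that the whole transcript is conditionally independent of $Y$ given $X_{\EQ}$, so that Bob's interleaved messages carry no information that could split a color class across an edge. Making this conditional-independence reduction precise, and confirming that the induced distribution over colorings exactly reproduces the protocol's transcript distribution, is the technically delicate step; the forward direction and the independence of each $S_m$ are routine given the results already proved.
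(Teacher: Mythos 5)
Your proposal is correct and follows essentially the same route as the paper: Alice sends the color of her class (with \Remarkref{bob-always-learn} giving privacy against Bob), and in the reverse direction privacy against Alice reduces the protocol to a single message $M=f(X_{\EQ},R)$ whose fixed-randomness restrictions are proper colorings. The only (harmless) divergence is in justifying properness of those colorings: the paper appeals directly to zero-error of each fixed-coin protocol, whereas you re-run the two-way expansion of $p(m,z|x,y)$ from \Theoremref{characterization} to show each support set $S_m$ is independent --- a slightly more explicit argument, and arguably cleaner for randomized outputs where ``zero-error of a fixed-coin protocol'' needs interpretation.
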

\begin{proof}
$\Rightarrow:$ Let $c$ be a proper coloring of the vertices of $G_{\EQ}$ that Alice and Bob agree upon. Suppose Alice's input is $x\in\X_{\EQ}$. She sends the message $c(x)$ to Bob; and Bob can compute the function correctly. Note that there is no error in computation, because all the vertices that have the same color as $c(x)$ form an independent set in the graph $G_{\EQ}$, and thus are equivalent from the function computation point of view.
Privacy against Alice comes from the fact that for any fix coloring, color of a vertex is a deterministic function of Alice's input. For privacy against Bob, see \Remarkref{bob-always-learn}.

$\Leftarrow:$ It follows from privacy against Alice \eqref{eq:ps_2_privacy-alice}, that we can assume, without loss of generality, that any secure protocol requires only one message to be sent from Alice to Bob, and then Bob outputs. Fix a secure protocol $\Pi$. Note that $\Pi$ may be a randomized protocol, but once we fix the random coins of the users, the protocol becomes deterministic. And because $\Pi$ computes $(p_{X_{\EQ}Y},p_{Z|X_{\EQ}Y})$ with zero-error, all the possible random coins give a zero-error protocol. Fix random coins $\vec{r}=(\vec{r}_1,\vec{r}_2)$ of the users, and let $\Pi_{\vec{r}}$ denote the resulting deterministic protocol. Suppose Alice's input is $x\in\X_{\EQ}$, and she sends a message $m(x)$ to Bob. Since this protocol $\Pi_{\vec{r}}$ is with zero-error, the coloring $c_{\vec{r}}(x)=m(x), \forall x\in\X_{\EQ}$ will be a proper coloring of the vertices of $G_{\EQ}$. Run through all the possible random coins $\vec{r}$, this will produce a random coloring $(c_{\vec{r}})_{\vec{r}}$ of the vertices, where for any randomness $\vec{r}$, the corresponding coloring $c_{\vec{r}}$ is a proper coloring.
\end{proof}
The {\em chromatic entropy} $H_{\X}(G,X)$ of a probabilistic graph $G=(V,E)$, where $X$ is the distribution on the vertex set, was defined in \cite{AlonOr96} as follows. Below, $c(X)$ denotes the induced distribution on colors by the coloring of the graph.
\[H_{\X}(G,X):=\min\{H(c(X)): c(X) \text{ is a proper coloring of G}\}.\]
\begin{defn}\label{defn:rate-defn1_ps2}
Suppose $(p_{XY},p_{Z|XY})$ is computable with perfect security. Then, for any perfectly secure protocol $\Pi$, we define its rate $R^{\PS_1}$ to be the expected length of the transcript generated by the protocol, i.e., $R^{\PS_1}:=\mathbb{E}[|M|]$.
\end{defn}
\begin{thm}\label{thm:ps_2_rate-bounds}
Any secure protocol for $(p_{XY},p_{Z|XY})$, if exists, satisfies the following bounds on its 
rate: \[H_{\X}(G_{\EQ},X_{\EQ}) \leq R^{\PS_1} < H_{\X}(G_{\EQ},X_{\EQ}) + 1.\]
\end{thm}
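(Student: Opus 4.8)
The plan is to establish the two inequalities separately: the lower bound $H_{\X}(G_{\EQ},X_{\EQ}) \leq R^{\PS_1}$ must hold for \emph{every} secure protocol, while the upper bound $R^{\PS_1} < H_{\X}(G_{\EQ},X_{\EQ}) + 1$ is an achievability statement for which I will exhibit one good protocol. The whole argument rests on \Claimref{color-equiv-protocol}, which identifies secure protocols with (collections of) proper colorings of $G_{\EQ}$, together with the elementary source-coding facts that a prefix-free code has expected length at least the entropy of its source, and that for any source one can build a prefix-free code of expected length strictly below the entropy plus one.

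For the lower bound, fix any secure protocol $\Pi$ for $(p_{XY},p_{Z|XY})$. Since the message depends on $X$ only through its equivalence class (the conditional independence noted after \Theoremref{characterization}), I may work directly with $X_{\EQ}$, and by the $\Leftarrow$ direction of \Claimref{color-equiv-protocol} I may assume without loss of generality that $\Pi$ consists of a single message $M$ from Alice to Bob. Because the messages form a prefix-free code, $R^{\PS_1} = \mathbb{E}[|M|] \geq H(M)$. I then condition on the users' private randomness $R$, which is independent of the input $X_{\EQ}$: for each fixed value $R=r$ the protocol is deterministic, so $M$ becomes a deterministic function $c_r$ of $X_{\EQ}$, and by \Claimref{color-equiv-protocol} this $c_r$ is a proper coloring of $G_{\EQ}$. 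Hence $H(M \mid R=r) = H(c_r(X_{\EQ})) \geq H_{\X}(G_{\EQ},X_{\EQ})$ by the definition of chromatic entropy as a minimum over proper colorings. Using that conditioning does not increase entropy, $H(M) \geq H(M \mid R) = \sum_r p(r)\, H(c_r(X_{\EQ})) \geq H_{\X}(G_{\EQ},X_{\EQ})$, which chains with the prefix-free bound to give the claim.

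For the upper bound I will pick a proper coloring $c^*$ of $G_{\EQ}$ attaining the chromatic entropy, i.e.\ $H(c^*(X_{\EQ})) = H_{\X}(G_{\EQ},X_{\EQ})$, and let Alice and Bob agree in advance on a prefix-free (Shannon/Huffman) code for the induced color distribution $c^*(X_{\EQ})$. Alice sends the codeword for her color $c^*(x_i)$, where $x\in\mathcal{X}_i$, and Bob decodes and samples $Z$; correctness holds because equally colored vertices are non-adjacent in $G_{\EQ}$ and therefore share the same output distribution for every relevant $y$. The expected codeword length satisfies $R^{\PS_1} = \mathbb{E}[|M|] < H(c^*(X_{\EQ})) + 1 = H_{\X}(G_{\EQ},X_{\EQ}) + 1$ by the standard source-coding bound. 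Security is inherited from \Claimref{color-equiv-protocol}: privacy against Alice is immediate since $M$ is a deterministic function of her input, and privacy against Bob follows because $M$ is a function of $X_{\EQ}$, which by \Remarkref{bob-always-learn} is itself determined by $(Y,Z)$, so $M-(Y,Z)-X$ holds trivially.

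The step I expect to require the most care is the lower bound's passage from an arbitrary, possibly interactive and randomized, secure protocol to the per-randomness coloring picture: I must justify the single-message reduction, verify that the private randomness is independent of the input so that the conditional law of $X_{\EQ}$ given $R=r$ is unchanged, and confirm that for each $r$ the induced map $c_r$ is genuinely a \emph{proper} coloring rather than merely a valid message assignment. For the upper bound the only point needing checking is that replacing the raw color index by a prefix-free codeword disturbs neither privacy condition, which holds because the encoding is a fixed deterministic map applied to the color.
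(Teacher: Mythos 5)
Your proposal is correct and follows essentially the same route as the paper: both directions rest on the protocol--coloring correspondence of \Claimref{color-equiv-protocol} combined with the standard prefix-free source-coding bounds, with the lower bound obtained by averaging over the users' private randomness. Your write-up merely spells out the details (the single-message reduction, independence of the randomness from the input, and the chain $\mathbb{E}[|M|]\geq H(M)\geq H(M\mid R)\geq H_{\X}(G_{\EQ},X_{\EQ})$) that the paper leaves implicit.
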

\begin{proof}
Upper bound: By \Claimref{color-equiv-protocol}, optimal coloring of $G_{\EQ}$ corresponds to a secure protocol for $(p_{XY},p_{Z|XY})$. So the expected length of the transcript (which is equal to the rate of the protocol) is upper-bounded by $H_{\X}(G_{\EQ},X_{\EQ})+1$.

Lower bound: By \Claimref{color-equiv-protocol}, every secure (randomized) protocol also corresponds to proper (random) colorings $(c_{\vec{r}})_{\vec{r}}$ of the graph. Since expected length $L(c_{\vec{r}})$ of any coloring $c_{\vec{r}}$ is lower-bounded by $H_{\X}(G_{\EQ},X_{\EQ})$, it follows that $\mathbb{E}_{\vec{r}}[L(c_{\vec{r}})]\geq H_{\X}(G_{\EQ},X_{\EQ})$. So the rate of communication is also lower-bounded by $H_{\X}(G_{\EQ},X_{\EQ})$.
\end{proof}

{\bf Multiple Instances: } 
Alice and Bob have blocks of inputs $X^n$ and $Y^n$, respectively, where $(X_i,Y_i)\sim p_{XY}$, i.i.d., and Bob wants to compute $Z^n$, where $Z^n$ is distributed according to $p_{Z^n|X^nY^n}(z^n|x^n,y^n)=\Pi_{i=1}^n p_{Z|XY}(z_i|x_i,y_i)$. It is easy to see that $(p_{X^nY^n},p_{Z^n|X^nY^n})$ is securely computable if and only if $(p_{XY},p_{Z|XY})$ satisfies the condition of \Theoremref{characterization}. We can define the graph $G_{\EQ}^n=(V_{\EQ}^n,E_{\EQ}^n)$ by extending the definition of $G_{\EQ}$ in a straightforward way: $V_{\EQ}^n=\X_{\EQ}^n$ -- the $n$-times cartesian product of $\X_{\EQ}$, and $E_{\EQ}^n=\{(x^n,x'^n)\in\X_{\EQ}^n\times \X_{\EQ}^n: \exists (y^n,z^n)\in\Y^n\times\Z^n\text{ s.t. } p_{X_{\EQ}^nY^n}(x^n,y^n)>0, p_{X_{\EQ}^nY^n}(x'^n,y^n)>0 \text{ and } p_{Z^n|X_{\EQ}^nY^n}(z^n|x^n,y^n)\neq p_{Z^n|X_{\EQ}^nY^n}(z^n|x'^n,y^n)\}$. It turns out that we can write an equivalent but simpler definition of $E_{\EQ}^n$ as follows: we put an edge between $(x_{1},x_{2},\hdots,x_{n})\in\X_{\EQ}^n$ and $(x'_{1},x'_{2},\hdots,x'_{n})\in\X_{\EQ}^n$ if there exists a $y^n\in \Y^n$ such that the following two conditions hold:
\begin{enumerate}
\item $\forall k\in[n], p_{X_{\EQ},Y}(x_{k},y_k)>0,p_{X_{\EQ},Y}(x'_{k},y_k)>0$.
\item $\exists k\in[n]$ and $\exists z\in\Z$ s.t. $p_{Z|X_{\EQ},Y}(z|x_{k},y_k)\neq p_{Z|X_{\EQ},Y}(z|x'_{k},y_k)$.
\end{enumerate}
Equivalence of the two definitions: Clearly, the first definition implies the second one. To see the other direction, let $x^n,x'^n,y^n,z$ be such that the above two conditions hold. Now consider an index $k\in[n]$ such that $p(z|x_k,y_k)\neq p(z|x_k,y_k)$, and, without loss of generality, assume that $p(z|x_k,y_k)> p(z|x_k,y_k)$. We need to find $z^n$ such that $p(z^n|x^n,y^n)\neq p(z^n|x'^n,y^n)$. Set $z_k=z$, and consider any $i\in[n]\setminus\{k\}$. If $p(z|x_i,y_i)=p(z_i|x'_i,y_i),\forall z\in\Z$, then set $z_i$ to be equal to any $z$ for which $p(z|x_i,y_i)>0$. Otherwise, pick a $z$ such that $p(z|x_i,y_i)>p(z_i|x'_i,y_i)$ -- such a $z$ exists because the functions $p(.|x_i,y_i)$ and $p(.|x'_i,y_i)$ are unequal -- and set $z_i=z$. It follows that with this $z^n$ we have $p(z^n|x^n,y^n)\neq p(z^n|x'^n,y^n)$.

If $(p_{X^nY^n},p_{Z^n|X^nY^n})$ is computable with perfect security, then for any secure protocol $\Pi_n$, we can define -- analogous to the \Definitionref{rate-defn1_ps2} -- the rate $R_n^{\text{PS}}=\frac{1}{n}\mathbb{E}[|M|]$, where $M$ is the transcript generated by the protocol $\Pi_n$. 
\begin{cor}\label{cor:ps_2_rate-bounds-n}
Any secure protocol $\Pi_n$, if exists, for $(p_{X^nY^n},p_{Z^n|X^nY^n})$ satisfies the following bounds on its 
rate: \[\frac{1}{n} H_{\X}(G_{\EQ}^n,X_{\EQ}^n) \leq R_n^{\text{PS}} \le \frac{1}{n} (H_{\X}(G_{\EQ}^n,X_{\EQ}^n) + 1).\]
\end{cor}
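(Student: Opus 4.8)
The plan is to observe that the $n$-instance problem $(p_{X^nY^n},p_{Z^n|X^nY^n})$ is itself nothing more than a single instance of the two-user secure computation problem studied in \Subsectionref{ps_twoSidedPrivacy}, with the finite alphabets $\X^n,\Y^n,\Z^n$ in place of $\X,\Y,\Z$. All the standing assumptions survive this substitution: finiteness of the alphabets is immediate, and because $p_X,p_Y,p_Z$ have full support, so do the product marginals $p_{X^n},p_{Y^n},p_{Z^n}$. Hence \Theoremref{ps_2_rate-bounds} can be applied \emph{verbatim} to the $n$-instance problem, and the only thing requiring verification is that the graph and vertex distribution that \Theoremref{ps_2_rate-bounds} attaches to this problem coincide with the $G_{\EQ}^n$ and $X_{\EQ}^n$ appearing in the statement.

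The crux is therefore to identify the equivalence classes and merged variable of the product problem. First I would check that, for \emph{distinct} tuples $x^n,x'^n\in\X^n$, the relation $\sim$ of \Definitionref{relation-tilde} applied to the $n$-instance problem holds if and only if, coordinatewise, either $x_k=x'_k$ or $x_k\sim x'_k$ in the single-instance sense. The positivity requirements factor across coordinates, and for a coordinate with $x_k=x'_k$ the within-coordinate condition is automatically satisfiable using full support of $p_X$ and the fact that $p_{Z|XY}(\cdot|x_k,y_k)$ is a probability distribution; a coordinate with $x_k\neq x'_k$ contributes exactly the single-instance condition $x_k\sim x'_k$. Passing to the transitive closure $\equiv$ of \Definitionref{equivalence-relation} then gives $x^n\equiv x'^n$ iff $x_k\equiv x'_k$ for every $k$ (with $x_k\equiv x_k$ read reflexively): one direction moves a single coordinate along a single-instance $\equiv$-chain while holding the others fixed, and the other follows by applying $\equiv$ coordinatewise along any product chain. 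Consequently the equivalence classes of $\X^n$ are precisely products of single-instance classes, the merged random variable for the product problem is $X_{\EQ}^n$, and the graph produced by the construction preceding \Claimref{color-equiv-protocol} is exactly $G_{\EQ}^n$; the equivalent simpler description of $E_{\EQ}^n$ given in the ``Multiple Instances'' discussion confirms the edge set matches.

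With these identifications settled, \Theoremref{ps_2_rate-bounds} applied to the $n$-instance problem gives, for any secure protocol $\Pi_n$ with transcript $M$,
\[
H_{\X}(G_{\EQ}^n,X_{\EQ}^n) \;\leq\; \mathbb{E}[|M|] \;<\; H_{\X}(G_{\EQ}^n,X_{\EQ}^n) + 1 .
\]
Dividing throughout by $n$ and substituting the definition $R_n^{\text{PS}}=\tfrac{1}{n}\mathbb{E}[|M|]$ (the analogue of \Definitionref{rate-defn1_ps2}) yields the claimed bounds, the weak inequality on the right being a harmless relaxation of the strict one.

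I expect the single non-routine step to be the factorization of the product equivalence classes described in the second paragraph, since everything in \Theoremref{ps_2_rate-bounds} is phrased through $X_{\EQ}$ and $G_{\EQ}$ and the corollary only makes sense once one knows these objects behave multiplicatively. The remaining ingredients—finiteness and full support of the product distributions, and the division by $n$—are entirely routine, so once the bookkeeping is in place the corollary is an immediate consequence of \Theoremref{ps_2_rate-bounds}.
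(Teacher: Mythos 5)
Your proposal is correct and follows essentially the same route the paper intends: the paper states the corollary without proof, implicitly treating the $n$-instance problem as a single instance over product alphabets and invoking \Theoremref{ps_2_rate-bounds}, which is exactly what you do. The one piece you work out explicitly---that $\sim$ and hence $\equiv$ factorize coordinatewise (using full support of $p_X$ for coordinates where the tuples agree), so that the equivalence classes of $\X^n$ are products of single-instance classes and the generic construction yields precisely $G_{\EQ}^n$ and $X_{\EQ}^n$---is the detail the paper leaves implicit, and your verification of it is sound.
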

Observe that $G_{\EQ}^n$ in neither the {\sc and}-product nor the {\sc or}-product of the graph $G_{\EQ}$.
And as far as we know, no single letter expression of $\frac{1}{n}H_{\X}(G_{\EQ}^n,X_{\EQ}^n)$ (not even of $\lim_{n\to\infty}\frac{1}{n}H_{\X}(G_{\EQ}^n,X_{\EQ}^n)$) is known.  See \cite{AlonOr96} for the definitions of {\sc and}-product and {\sc or}-product of graphs.

\subsection{Asymptotic Security}\label{subsec:as_twoSidedPrivacy}
We now consider asymptotically secure computation. Alice and Bob have inputs $X^n$ and $Y^n$, respectively, where $(X_i,Y_i)$ are i.i.d. and distributed according to $p_{XY}$. Bob computes $\hat{Z}^n\sim p_{\hat{Z}^n|X^nY^n}$ such that $\mathbb{E}[||p_{\hat{Z}^n|X^nY^n}-p_{Z^n|X^nY^n}||_1]\to0$ as $n\to\infty$. We also allow vanishing information leakage, i.e., the average amount of information any user obtains from the protocol about other user's data must go to zero as block-length goes to infinity. The rate $R^{\AS_1}$ of a protocol is defined as $R^{\AS_1}:=\mathbb{E}[|M|]$, where $M$ is the transcript generated by the protocol. A formal definition is given below.
\begin{defn}\label{defn:as_twoSidedPrivacy}
For a secure computation problem $(p_{XY},p_{Z|XY})$, the rate $R^{\AS_1}$ is achievable 
if there exists a sequence of protocols $\Pi_n$ with rate $R^{\AS_1}$ such that for 
every $\epsilon>0$, there is a large enough $n$ such that
\begin{align}
\mathbb{E}_{X^nY^n}||p_{\hat{Z}^n|X^nY^n}-p_{Z^n|X^nY^n}||_1 &\leq \epsilon, \label{eq:as_2_correctness} \\
I(M;Y^n,\hat{Z}^n|X^n) &\leq n\epsilon, \label{eq:as_2_privacy-alice} \\
I(M;X^n|Y^n,\hat{Z}^n) &\leq n\epsilon. \label{eq:as_2_privacy-bob}
\end{align}
\end{defn}
We first show that the set of $(p_{XY},p_{Z|XY})$ pairs that are computable with asymptotic 
security is the same as the set of $(p_{XY},p_{Z|XY})$ pairs that are computable with 
perfect security.
\begin{thm}\label{thm:as_characterization}
Let $\X=\mathcal{X}_1\biguplus\mathcal{X}_2\biguplus\hdots\biguplus\mathcal{X}_k$ be the partition induced by the equivalence relation $\equiv$.
Then, $(p_{XY},p_{Z|XY})$ is computable with asymptotic security if and only if 
each $\mathcal{X}_i\times \Y$, $i\in\{1,2,\hdots,k\}$, is column monochromatic.
\end{thm}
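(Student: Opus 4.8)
The plan is to prove both directions of the equivalence, reusing the perfect-security characterization (\Theoremref{characterization}) as the scaffolding. The ``if'' direction is immediate: if each $\mathcal{X}_i\times\Y$ is column monochromatic, then by \Theoremref{characterization} the pair is computable with \emph{perfect} security, and a perfectly secure protocol trivially satisfies the asymptotic conditions \eqref{eq:as_2_correctness}--\eqref{eq:as_2_privacy-bob} with $\epsilon=0$ for every $n$ (just run the single-shot protocol $n$ times in parallel). So the entire content of the theorem lies in the ``only if'' direction, which is a converse: I must show that even the \emph{relaxed} asymptotic requirements force column monochromaticity.

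The strategy for the converse is to mimic the perfect-security converse but replace exact equalities by approximate ones controlled by $\epsilon$. Concretely, I would argue the contrapositive: suppose some $\mathcal{X}_i\times\Y$ is \emph{not} column monochromatic, so there exist $x,x'$ in the same equivalence class and a $y$ with $p_{XY}(x,y),p_{XY}(x',y)>0$ but $p_{Z|XY}(\cdot\mid x,y)\neq p_{Z|XY}(\cdot\mid x',y)$. I then want to show no sequence of protocols can satisfy all three asymptotic conditions simultaneously. The first step is to establish the approximate analogue of the single-letter relations used in the perfect case: from the vanishing leakage conditions \eqref{eq:as_2_privacy-alice} and \eqref{eq:as_2_privacy-bob}, together with Pinsker's inequality, I would derive that on most coordinates the transcript-induced distributions $p(m\mid x)$ and $p(m\mid x')$ are close, and that the output $p(z\mid x,y)$ is close to $p(z\mid m,y)$ — the approximate versions of \eqref{eq:ps_2_first-compare} and \eqref{eq:ps_2_second-compare}. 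Chaining these along the $\sim$-path connecting $x$ to $x'$ (finite length, independent of $n$) yields that $p(z\mid x,y)$ and $p(z\mid x',y)$ must be within $O(\sqrt{\epsilon})$ of each other. Since $\epsilon\to0$ while the gap $\|p_{Z|XY}(\cdot\mid x,y)-p_{Z|XY}(\cdot\mid x',y)\|_1$ is a fixed positive constant, combining this with the correctness condition \eqref{eq:as_2_correctness} produces a contradiction for large $n$.

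\textbf{The main obstacle} will be converting the per-block (averaged over $n$ coordinates) information bounds into a statement about a \emph{single} coordinate, since the asymptotic conditions are stated as $I(M;\cdot)\le n\epsilon$ for the whole block, and the correctness is an average $\ell_1$ bound. The natural tool is a single-letterization / averaging argument: by the chain rule, $I(M;Y^n,\hat Z^n\mid X^n)\le n\epsilon$ forces the average over coordinates $k$ of a suitable conditional-mutual-information term to be at most $\epsilon$, so at least a constant fraction of coordinates are ``good'' (leakage $\le$ a constant multiple of $\epsilon$). I would then fix one such good coordinate $k$, condition on a typical value of the other coordinates $X_{-k},Y_{-k},\hat Z_{-k}$, and run the approximate chaining argument within that single coordinate, treating the rest of the block as part of the (conditioning on the) transcript. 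The delicate point is ensuring that the Markov structure $Z-(Y,M)-X$ survives coordinate-wise and that conditioning on the other coordinates does not destroy the i.i.d. product structure I need — this is where most of the technical care goes.

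The remaining steps are routine once the single-coordinate approximate identities are in hand: assemble the three $O(\sqrt{\epsilon})$ bounds via the triangle inequality to conclude $\|p_{Z|XY}(\cdot\mid x,y)-p_{Z|XY}(\cdot\mid x',y)\|_1=O(\sqrt{\epsilon})$, and let $\epsilon\to0$ to contradict the assumed fixed positive gap. This establishes the contrapositive of the converse and, together with the easy ``if'' direction, completes the proof that the asymptotic-security characterization coincides with the perfect-security one.
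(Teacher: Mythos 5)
Your proposal is correct in outline, but it takes a genuinely different route from the paper for the converse direction. The paper performs essentially the same single-letterization you describe (\Lemmaref{as-implies-ss} is your averaging/time-sharing step, with the cross-coordinate correction terms controlled in \Lemmaref{small}), but from there it argues softly rather than quantitatively: it defines the nested family of sets $\mathcal{S}_{\epsilon}$ of single-letter distributions $p_{M\hat{Z}|XY}$ with leakage and error at most $\epsilon$, shows each $\mathcal{S}_{\epsilon}$ is compact, applies Cantor's intersection theorem to extract an element of $\mathcal{S}_0$ --- i.e., an \emph{exactly} perfectly secure distribution --- and then simply invokes \Theoremref{characterization}. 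You instead rerun the perfect-security converse with explicit error terms: Pinsker converts the $\leq\epsilon$ information bounds into $O(\sqrt{\epsilon})$ total-variation bounds (the constants involve $1/p(x,y,z)$, which are bounded since the alphabets are finite and the relevant masses are positive by the definition of $\sim$), you chain these along the finite $\sim$-path, and you contradict the fixed gap $\|p_{Z|XY}(\cdot|x,y)-p_{Z|XY}(\cdot|x',y)\|_1>0$. Both routes work. Yours yields an explicit quantitative impossibility (a non-monochromatic pair forces constant leakage or error in any protocol) at the cost of more bookkeeping: the approximate analogues of \eqref{eq:ps_2_first-compare} and \eqref{eq:ps_2_second-compare} hold only averaged over $m$ under $p_{M|x}$ and $p_{M|x'}$ respectively, so you need the already-established closeness of these two measures to combine them by the triangle inequality; and the cross-coordinate terms such as $I(Y_{-i},\hat{Z}^{i-1};X_i,\hat{Z}_i|Y_i)$ are not identically zero but only small because $\hat{Z}^n$ is close to $Z^n$ (the content of the paper's \Lemmaref{small}), which your ``condition on a typical value of the other coordinates'' step must also account for. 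The paper's compactness route avoids tracking any constants but must instead establish closedness of $\mathcal{S}_{\epsilon}$ via continuity of mutual information and of the $L_1$ norm.
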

\begin{proof}
Note that in order to prove the theorem it suffices to prove that every $(p_{XY},p_{Z|XY})$ 
computable with asymptotic security is also computable with perfect security.
Assume that $(p_{XY},p_{Z|XY})$ is computable with asymptotic security, which means that for every $\epsilon>0$, there is a large enough $n$ and a protocol $\Pi_n$ that satisfies \eqref{eq:as_2_correctness}-\eqref{eq:as_2_privacy-bob}. As we show in \Lemmaref{as-implies-ss} in \Appendixref{app-twoSidedPrivacy}, this implies existence of another protocol $\Pi_{\epsilon'}$ (with possibly different transcript $M'$) that satisfies the following conditions for some $\epsilon'>0$, where $\epsilon'\to0$ as $\epsilon\to0$.
\begin{align}
\mathbb{E}_{XY}||p_{\hat{Z}|XY}-p_{Z|XY}||_1 &\leq \epsilon', \label{eq:ss_2_correctness} \\
I(M';Y,\hat{Z}|X) &\leq \epsilon', \label{eq:ss_2_privacy-alice} \\
I(M';X|Y,\hat{Z}) &\leq \epsilon'. \label{eq:ss_2_privacy-bob}
\end{align}
Now, for a given $(p_{XY},p_{Z|XY})$, consider the following set.
\begin{align}
\mathcal{S}_{\epsilon} := \{p_{M\hat{Z}|XY} \ &: \ I(M;Y,\hat{Z}|X)\leq \epsilon, \notag \\
&\hspace{0.6cm} I(M;X|Y,\hat{Z}) \leq \epsilon, \notag \\
&\hspace{0.9cm} \mathbb{E}_{XY}||p_{\hat{Z}|XY}-p_{Z|XY}||_1 \leq \epsilon \}\label{eq:set-for-ss}.
\end{align}
It follows that the above set $\mathcal{S}_{\epsilon}$ is non-empty for every $\epsilon>0$,
and if $\epsilon_1>\epsilon_2>\epsilon_3>\hdots$ where $\epsilon_i$'s form a monotonically decreasing sequence with $\lim_{i\to\infty}\epsilon_i=0$, then it can be verified easily that $\mathcal{S}_{\epsilon_1}\supseteq\mathcal{S}_{\epsilon_2}\supseteq\mathcal{S}_{\epsilon_3}\supseteq\hdots$
We prove in \Lemmaref{compactness} in \Appendixref{app-twoSidedPrivacy} (using continuity of mutual information and continuity of $L_1$-norm) that $\forall \epsilon>0$, the set $\mathcal{S}_{\epsilon}$ is compact. 
By Cantor's intersection theorem, which states that a decreasing nested sequence of non-empty 
compact sets has non-empty intersection, we have that $\bigcap_{i}\mathcal{S}_{\epsilon_i}\neq \phi$.
We show, below, that the set $\mathcal{S}_0$, which corresponds to the perfect security, is equal to $\lim_{k\to\infty}\bigcap_{i=1}^k\mathcal{S}_{\epsilon_i}$. This proves the theorem.

\[\mathcal{S}_0 = \lim_{k\to\infty}\bigcap_{i=1}^k\mathcal{S}_{\epsilon_i}.\]
$\mathcal{S}_0\subseteq \lim_{k\to\infty}\bigcap_{i=1}^k\mathcal{S}_{\epsilon_i}$: This trivially holds.

$\lim_{k\to\infty}\bigcap_{i=1}^k\mathcal{S}_{\epsilon_i} \subseteq \mathcal{S}_0$: Since $\bigcap_{i}\mathcal{S}_{\epsilon_i}\neq \phi$, there exists $q_{M\hat{Z}|XY}\in\bigcap_{i}\mathcal{S}_{\epsilon_i}$. 
Since $q_{M\hat{Z}|XY}$ satisfies \eqref{eq:set-for-ss} for every $\epsilon_k$ in the sequence, we have that $I(M;Y,\hat{Z}|X)|_{q_{M\hat{Z}|XY}} \leq \epsilon_k,\quad \forall k\in\mathbb{N}$. Now, the fact that mutual information is always non-negative and that $\lim_{k\to\infty}\epsilon_k=0$, we have $I(M;Y,\hat{Z}|X)|_{q_{M\hat{Z}|XY}}=0$. Similarly we can prove $I(M;X|Y,\hat{Z})|_{q_{M\hat{Z}|XY}} = 0$ and $\mathbb{E}_{XY}||p_{\hat{Z}|XY}-p_{Z|XY}||_1|_{q_{M\hat{Z}|XY}}=0$. The latter conclusion implies that $p_{XY\hat{Z}}=p_{XYZ}$, i.e., Bob's output is with correct distribution. So we can replace $\hat{Z}$ by $Z$ and get $I(M;Y,Z|X)=0$ and $I(M;X|Y,Z)= 0$. This implies that $q_{MZ|XY}$ belongs to the set $S_0$, which concludes the proof.
\end{proof}
\begin{thm}\label{thm:as_2_rate-region}
\[R^{\AS_1}=H(X_{\EQ}|Y).\]
\end{thm}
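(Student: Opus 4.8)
The plan is to prove the matching bounds $R^{\AS_1}\ge H(X_{\EQ}|Y)$ and $R^{\AS_1}\le H(X_{\EQ}|Y)$ separately. The organizing observation, already isolated in \Remarkref{bob-always-learn}, is that $X_{\EQ}$ is a deterministic function of $(Y,Z)$ on the support of $p_{YZ}$; write $X_{\EQ}=g(Y,Z)$ for this function (applied coordinate-wise as $g^n$). Thus correctly producing the output forces Bob to essentially recover $X_{\EQ}^n$ from his side information $Y^n$ and the transcript, which recasts the task as Slepian--Wolf source coding of $X_{\EQ}^n$ with decoder side information $Y^n$, whose optimal rate is $H(X_{\EQ}|Y)$.

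For the converse, I would fix any secure $\Pi_n$ with parameter $\epsilon$ and write $\hat Z^n$ as a function of $(Y^n,M,R_B)$, where $R_B$ is Bob's private randomness. Since $\mathbb{E}_{X^nY^n}\|p_{\hat Z^n|X^nY^n}-p_{Z^n|X^nY^n}\|_1\le\epsilon$, a maximal coupling yields a joint law under which the actual output $\hat Z^n$ equals the ideal $Z^n$ with probability $\ge 1-\epsilon/2$; applying $g^n$ then makes $g^n(Y^n,\hat Z^n)$ an estimate of $X_{\EQ}^n=g^n(Y^n,Z^n)$ with error probability $\delta_n\to0$. Fano's inequality gives $H(X_{\EQ}^n\mid Y^n,M,R_B)\le 1+n\delta_n\log|\X_{\EQ}|$. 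Because $R_B$ is independent of $(X^n,Y^n)$, we have $I(X_{\EQ}^n;R_B\mid Y^n)=0$, so with i.i.d.\ inputs
\begin{align*}
nH(X_{\EQ}|Y)&=H(X_{\EQ}^n\mid Y^n)\\
&\le I(X_{\EQ}^n;M\mid Y^n,R_B)+1+n\delta_n\log|\X_{\EQ}|\\
&\le H(M)+1+n\delta_n\log|\X_{\EQ}|.
\end{align*}
The prefix-free property gives $H(M)\le\mathbb{E}[|M|]$, so dividing by $n$ and letting $n\to\infty$, $\epsilon\to0$ yields $R^{\AS_1}\ge H(X_{\EQ}|Y)$. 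Note this direction uses only correctness and the independence of $R_B$, not privacy.

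For achievability I would first pass to the equivalent problem $(p_{X_{\EQ}Y},p_{Z|X_{\EQ}Y})$ and invoke Slepian--Wolf \cite{SlepianWo73}: Alice forms $X_{\EQ}^n$ from $X^n$ and sends the bin index $M=f(X_{\EQ}^n)$ at rate $H(X_{\EQ}|Y)+\epsilon$; Bob decodes $\hat X_{\EQ}^n$ from $(M,Y^n)$ by joint typicality and samples $\hat Z_i\sim p_{Z|X_{\EQ}Y}(\cdot\mid \hat X_{\EQ,i},Y_i)$. The rate is $H(X_{\EQ}|Y)+\epsilon$, and correctness holds because on the high-probability event $\hat X_{\EQ}^n=X_{\EQ}^n$ Bob samples from exactly $p_{Z|XY}(\cdot\mid X_i,Y_i)$, so the expected $L_1$ error vanishes. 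Privacy against Alice is immediate: $M$ is a deterministic function of $X^n$, hence $I(M;Y^n,\hat Z^n\mid X^n)=0$.

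The main obstacle I anticipate is privacy against Bob, $I(M;X^n\mid Y^n,\hat Z^n)\le n\epsilon$; this is precisely why the raw problem must be modified to run on $X_{\EQ}$ rather than $X$. Here I would exploit that $M=f(X_{\EQ}^n)$ depends on $X^n$ only through $X_{\EQ}^n$, and that, by \Remarkref{bob-always-learn} applied to Bob's own consistent sample, $g^n(Y^n,\hat Z^n)=\hat X_{\EQ}^n$ holds deterministically. Letting $E$ indicate the Slepian--Wolf decoding error $\{\hat X_{\EQ}^n\ne X_{\EQ}^n\}$, on $E=0$ the transcript $M=f\bigl(g^n(Y^n,\hat Z^n)\bigr)$ is a deterministic function of $(Y^n,\hat Z^n)$ and so contributes no conditional entropy; bounding the $E=1$ contribution by $P(E)\log|\mathcal{M}|+1$ and using $P(E)\to0$ gives $\tfrac1n H(M\mid Y^n,\hat Z^n)\to0$, hence $\tfrac1n I(M;X^n\mid Y^n,\hat Z^n)\to0$. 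I expect the delicate bookkeeping to lie in the coupling/Fano step of the converse and in making the error-event split of the privacy-against-Bob bound fully rigorous.
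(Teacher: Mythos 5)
Your proof is correct, and your converse takes a genuinely different route from the paper's. The paper lower-bounds $H(M)$ by a long single-letterization chain ending in a minimization over an auxiliary variable $U$ subject to the Markov chains $U-X-(Y,Z)$, $U-(Y,Z)-X$ and $Z-(U,Y)-X$, and then needs continuity of mutual information and right-continuity of the rate function $R_\epsilon$ at $\epsilon=0$ (proved via compactness in \Lemmaref{rate-region-closedness} and \Lemmaref{right-continuous}) before \Remarkref{bob-always-learn} finally collapses the minimum to $H(X_{\EQ}|Y)$. You instead invoke \Remarkref{bob-always-learn} at the outset: since $X_{\EQ}=g(Y,Z)$ on the support of $p_{YZ}$, approximate correctness alone forces $g^n(Y^n,\hat Z^n)$ to equal $X_{\EQ}^n$ except with probability $O(\epsilon)$ (the offending $\hat z^n$'s have zero probability under the ideal conditional, so their actual probability is at most the $L_1$ gap), and Fano together with the independence of $R_B$ from $(X^n,Y^n)$ yields the Slepian--Wolf converse directly. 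This is shorter, bypasses the appendix machinery entirely, and makes explicit that the lower bound uses neither privacy condition. Your achievability is the same Slepian--Wolf scheme as the paper's, and you actually supply more detail than the paper on privacy against Bob via the error-event split bounding $H(M\mid Y^n,\hat Z^n)$, where the paper simply points to \Remarkref{bob-always-learn}. Two small points to tighten: the claim that $g^n(Y^n,\hat Z^n)=\hat X_{\EQ}^n$ holds ``deterministically'' is guaranteed only when each decoded $\hat X_{\EQ,i}$ is compatible with $Y_i$ (e.g., on the no-error event, or if the decoder is restricted to jointly typical outputs) --- your $E=1$ branch already absorbs the exception; and in the coupling step you should note that the error probability $\delta_n$ is a functional of the original joint law of $(X^n,Y^n,M,R_B,\hat Z^n)$, so introducing the coupling with the ideal $Z^n$ does not alter it.
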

\begin{proof}
{\allowdisplaybreaks
{\bf Lower bound $R^{\AS_1}\geq H(X_{\EQ}|Y)$:} We prove lower bound for protocols with multiple rounds of interaction, where messages sent by any user may belong to a variable length prefix-free code. This implies that $\mathbb{E}[|M|] \geq H(M)$.
Since $R^{\AS_1}=\frac{1}{n}\mathbb{E}[|M|]$, it is enough to show that $H(M)\geq H(X_{\EQ}|Y)$.
\begin{align}
&H(M) \geq H(M|Y^n) \geq I(M;X^n,\hat{Z}^n|Y^n) \notag \\
&= \sum_{i=1}^n I(M;X^n,\hat{Z}_i|\hat{Z}^{i-1},Y^n) \notag \\
&\geq \sum_{i=1}^n I(M;X_i,\hat{Z}_i|\hat{Z}^{i-1},Y^n) \notag \\
&= \sum_{i=1}^n I(M,Y_{-i},\hat{Z}^{i-1};X_i,\hat{Z}_i|Y_i) \notag\\
&\hspace{3cm} - \sum_{i=1}^n I(Y_{-i},\hat{Z}^{i-1};X_i,\hat{Z}_i|Y_i) \label{eq:as2-interim-5} \\
&\geq \sum_{i=1}^n I(M_i;X_i,\hat{Z}_i|Y_i) - n\epsilon_3,\ \text{ where }M_i=(M,Y_{-i})\notag \\
&= n\cdot \sum_{i=1}^n \frac{1}{n}\cdot I(M_i;X_i,\hat{Z}_i|Y_i,T=i) - n\epsilon_3 \label{eq:as2-interim-3} \\
&= n\cdot I(M_T;X_T,\hat{Z}_T|Y_T,T) - n\epsilon_3 \notag \\
&= n\cdot I(M_T,T;X_T,\hat{Z}_T|Y_T) - n\epsilon_3 \label{eq:as2-interim-2}
\end{align}
Note that the second summation in \eqref{eq:as2-interim-5} becomes zero if we replace $\hat{Z}^n$ by $Z^n$. Now, since $\hat{Z}^n$ is close to $Z^n$, using continuity of mutual information this summation can be upper-bounded by $n\epsilon_3$, where $\epsilon_3\to0$ as $\epsilon\to0$. Proof is given in \Lemmaref{small} in \Appendixref{app-twoSidedPrivacy}.
The random variable $T$ in \eqref{eq:as2-interim-3} is independent of $(M,X^n,Y^n,Z^n)$ and is uniformly distributed in $\{1,2,\hdots,n\}$. Let $M'=(M_T,T)$.
Similar to the above single-letterization by which we obtain \eqref{eq:as2-interim-2}, we can prove the following also (proof is along the lines of the proof of \Lemmaref{as-implies-ss} in \Appendixref{app-twoSidedPrivacy}):
\begin{align*}
I(M;Y^n,\hat{Z}^n|X^n)\leq \epsilon &\implies I(M';Y_T,\hat{Z}_T|X_T)\leq \epsilon' \\
I(M;X^n|Y^n,\hat{Z}^n)\leq \epsilon &\implies I(M';X_T|Y_T,\hat{Z}_T)\leq \epsilon' \\
\hat{Z}^n-(M,Y^n)-X^n &\implies \hat{Z}_T-(M',Y_T)-X_T
\end{align*}
\begin{align*}
\mathbb{E}_{X^nY^n}[||p_{\hat{Z}^n|X^nY^n}-p_{Z^n|X^nY^n}&||_1] \leq \epsilon \implies\\
\mathbb{E}_{X_TY_T}||&p_{\hat{Z}_T|X_TY_T}-p_{Z_T|X_TY_T}||_1 \leq \epsilon'
\end{align*}
where $\epsilon'\to0$ as $\epsilon\to0$. Now continuing from \eqref{eq:as2-interim-2}:
\begin{align}
&\geq n\cdot \displaystyle \min_{\substack{p_{M'\hat{Z}_T|X_TY_T}: \\ I(M';Y_T,\hat{Z}_T|X_T)\leq \epsilon' \\ I(M';X_T|Y_T,\hat{Z}_T)\leq \epsilon' \\ \hat{Z}_T-(M',Y_T)-X_T \\ \mathbb{E}_{X_TY_T}||p_{\hat{Z}_T|X_TY_T}\\\hspace{1cm}-p_{Z_T|X_TY_T}||_1 \leq \epsilon'}} I(M';X_T,\hat{Z}_T|Y_T) - \epsilon_3 \notag \\
&\geq n\cdot \displaystyle \min_{\substack{p_{M'\hat{Z}|XY}: \\ I(M';Y,\hat{Z}|X)\leq \epsilon' \\ I(M';X|Y,\hat{Z})\leq \epsilon' \\ \hat{Z}-(M',Y)-X \\ \mathbb{E}_{XY}||p_{\hat{Z}|XY}-p_{Z|XY}||_1 \leq \epsilon'}} I(M';X,\hat{Z}|Y) - \epsilon_3 \label{eq:as2-interim} \\
&\geq n\cdot \displaystyle \min_{\substack{p_{U|X}: \\ U-X-(Y,Z) \\ U-(Y,Z)-X \\ Z-(U,Y)-X}} I(U;X,Z|Y) - \epsilon''- \epsilon_3 \label{eq:as2-interim1} \\
&\geq n\cdot \displaystyle \min_{\substack{p_{U|X}: \\ U-X-(Y,Z) \\ U-(Y,Z)-X \\ Z-(U,Y)-X}} I(U;X|Y) - \epsilon''- \epsilon_3 \label{eq:as2-interim1-5} \\
&\geq n\cdot \displaystyle \min_{\substack{p_{U|X_{\EQ}}: \\ U-X_{\EQ}-(Y,Z) \\ Z-(U,Y)-X_{\EQ}}} I(U;X_{\EQ}|Y) - \epsilon''- \epsilon_3 \label{eq:as2-interim2} \\
&= n\cdot (H(X_{\EQ}|Y)-\epsilon''-\epsilon_3) \label{eq:as2-interim3}
\end{align}
}
where $\epsilon''+\epsilon_3\to0$ as $\epsilon\to0$.

{\bf Explanation for \eqref{eq:as2-interim1}:} This follows from continuity of mutual information, continuity of $L_1$-norm, and the fact that $(p_{XY},p_{Z|XY})$ satisfies the characterization of \Theoremref{as_characterization}. Detailed proof is given in \Appendixref{app-twoSidedPrivacy}.

{\bf Explaination for \eqref{eq:as2-interim1-5}:} Using the Markov chain $U-X-(Y,Z)$ condition we can further simplify the objective function $I(U;X,Z|Y)$ as follows: $I(U;Y,Z|X)=I(U;Y|X)+I(U;Z|X,Y)$, where the second term in the right hand side is zero because of the Markov chain $U-X-(Y,Z)$. Note that $U-X-(Y,Z)\implies U-(X,Y)-Z$.

{\bf Explanation for \eqref{eq:as2-interim2}:} Note that for a fix $(p_{XY},p_{Z|XY})$, $X_{\EQ}$ is a function of $X$. From the proof of Theorem \ref{thm:characterization} we have that the transcript $M$ is conditionally independent of the input $X$ conditioned on the equivalence class to which $X$ belongs, which implies that the Markov Chain $U-X_{\EQ}-X$ holds. Now \eqref{eq:as2-interim2} follows from the following:
\begin{itemize}
\item $U-X-(Y,Z)$ together with $U-X_{\EQ}-X$ implies $U-X_{\EQ}-(Y,Z)$:
\begin{align*}
0&=I(U;Y,Z|X)=I(U;Y,Z,X)-I(U;X)\\
&=I(U;Y,Z,X,X_{\EQ})-I(U;X,X_{\EQ})\\
&=I(U;Y,Z,X|X_{\EQ}) + I(U;X_{\EQ})\\
&\hspace{2cm} -I(U;X_{\EQ})-\underbrace{I(U;X|X_{\EQ})}_{=\ 0}\\
&\geq I(U;Y,Z|X_{\EQ})
\end{align*}
\item Since $X_{\EQ}$ is a function of $X$, the Markov chains $U-(Y,Z)-X$ and $Z-(U,Y)-X$ imply $U-(Y,Z)-X_{\EQ}$ and $Z-(U,Y)-X_{\EQ}$, respectively.
\item Similarly, the objective function $I(U;X,Z|Y)$ can be lower-bounded by $I(U;X_{\EQ},Z|Y)$, which is equal to $I(U;X_{\EQ}|Y)$ because of the Markov chain $U-X_{\EQ}-(Y,Z)$ condition.
\end{itemize}
Notice that we omit the Markov chain $U-(Y,Z)-X_{\EQ}$ from the right hand side of \eqref{eq:as2-interim2}. This Markov chain is redundant because Bob always learns $X_{\EQ}$; see \Remarkref{bob-always-learn}.

{\bf Explanation for \eqref{eq:as2-interim3}:} It follows from the Markov chain $Z-(U,Y)-X_{\EQ}$ that $H(X_{\EQ}|U,Y)=H(X_{\EQ}|U,Y,Z)$, where $H(X_{\EQ}|U,Y,Z)=0$, because from $(Y,Z)$ Bob always learns $X_{\EQ}$ (see \Remarkref{bob-always-learn}). So we have $H(X_{\EQ}|U,Y)=0$, which implies that $I(U;X_{\EQ}|Y)=H(X_{\EQ}|Y)$.

\noindent{\bf Upper bound $R^{\AS_1}\leq H(X_{\EQ}|Y)$:}
Our achievability is along the lines of the coding scheme given by Slepian and Wolf \cite{SlepianWo73}, in which Alice sends only one message to Bob, and then Bob computes the function. It requires a rate of $H(X_{\EQ}|Y)$. Details follow:
Note that if the function computation is correct, then Bob always learns $X_{\EQ}^n$; see \Remarkref{bob-always-learn}. Since we allow small probability of error in function computation, it follows that Bob learns $X_{\EQ}^n$ with very high probability. So we design a scheme in which Bob, having side information $Y^n$, learns $X_{\EQ}^n$ with very high probability (and therefore computes the output $Z^n$ by sampling from the distribution $p_{Z^n|X_{\EQ}^nY^n}$, whenever he recovers $X_{\EQ}^n$ correctly). This is exactly the Slepian-Wolf problem \cite{SlepianWo73}, where a sender wants to send his input sequence to a receiver who has a side information. Notice that the coding scheme in \cite{SlepianWo73} trivially satisfies privacy against Alice, because it is designed for $p_{X_{\EQ}^nY^n}$ and is independent of $p_{Z^n|X_{\EQ}^nY^n}$.
\end{proof}
\section{Privacy against Alice, and Bob computes}\label{sec:oneSidedPrivacy}
In this section the setup is exactly the same as the setup of \Sectionref{twoSidedPrivacy}, except for that we do not have privacy against Bob here.
We can define secure protocols with perfect security and asymptotic security, and the corresponding rates, $R^{\PS_2}, R_n^{PS_2}$ and $R^{\AS_2}$, analogous to as we defined them in \Sectionref{twoSidedPrivacy}.

It turns out that any $(p_{XY},p_{Z|XY})$ can be computed securely: Alice sends $X$ to Bob, and Bob outputs $Z$ according to $p_{Z|XY}$. In order to state our results, we define the following graph $G=(V,E)$: The vertex set $V=\X$, and the edge set $E=\{(x,x'): \exists (y,z)\in\Y\times\Z\text{ s.t. }p_{XY}(x,y)>0,p_{XY}(x',y)>0 \text{ and } p_{Z|XY}(z|x,y)\neq p_{Z|XY}(z|x',y)\}$. For $(p_{X^nY^n},p_{Z^n|X^nY^n})$ we can define the graph $G^n$ in a similar fashion as we did in \Sectionref{twoSidedPrivacy}. 

An {\em independent set} of a graph $G=(V,E)$ is a collection $U\subseteq V$ of vertices such that no two vertices of $U$ are connected by an edge in $G$.
Let $W$ denote the random variable corresponding to the independent sets in $G$, and let $\varGamma(G)$ be the set of all independent sets in $G$. The conditional graph entropy of $G$ is defined as follows:
\begin{defn}[Conditional Graph Entropy, \cite{OrlitskyRoche}]\label{defn:conditional-graph-entropy}
For a given $(p_{XY},p_{Z|XY})$, the conditional graph entropy of $G$ (as defined above), is defined as
\[H_{G}(X|Y):= \min_{\substack{p_{W|X}: \\ W-X-Y \\ X\in W\in \varGamma(G)}}I(W;X|Y),\]
where minimum is taken over all the conditional distributions $p_{W|X}$ such that $p(w|x)>0$ only if $x\in w$.
\end{defn}
\begin{thm}\label{thm:rate-bounds_privacy-alice}
For a given $(p_{XY},p_{Z|XY})$, the following bounds hold:
\begin{align}
H_{\X}(G,X) \leq &R^{\PS_2} < H_{\X}(G,X) + 1, \label{eq:ps_1_rate-bounds} \\
\frac{1}{n} H_{\X}(G^n,X^n) \leq &R_n^{\PS_2} \le \frac{1}{n} (H_{\X}(G^n,X^n) + 1), \label{eq:ps_1_rate-bounds-n} \\
& R^{\AS_2} = H_G(X|Y). \label{eq:as_1_rate-region}
\end{align}
\end{thm}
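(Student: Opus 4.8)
The plan is to treat the three bounds separately, leaning on the machinery already developed for the two-sided case.

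\textbf{Perfect security} \eqref{eq:ps_1_rate-bounds} and \eqref{eq:ps_1_rate-bounds-n}. First I would establish a coloring--protocol correspondence for $G$ that is the exact analogue of \Claimref{color-equiv-protocol}. The only structural change from \Sectionref{twoSidedPrivacy} is that, since privacy is no longer demanded against Bob, Bob is allowed to learn $X$ itself, so the relevant graph is $G$ on the full alphabet $\X$ rather than the collapsed $G_{\EQ}$ on $\X_{\EQ}$. Concretely: given a proper coloring $c$ of $G$, Alice sends $c(X)$; vertices of equal color form an independent set of $G$, hence share the distribution $p_{Z|XY}(\cdot\mid x,y)$ for every $y$ with positive mass, so Bob computes $Z$ with zero error. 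Privacy against Alice holds because $c(X)$ is a deterministic function of $X$. Conversely, privacy against Alice lets us assume a single message from Alice (as in the backward direction of \Claimref{color-equiv-protocol}); fixing the users' coins makes the protocol deterministic, and the zero-error requirement forces the induced map $x\mapsto m(x)$ to be a proper coloring of $G$. Once this correspondence is in place, \eqref{eq:ps_1_rate-bounds} follows verbatim from the argument of \Theoremref{ps_2_rate-bounds}, and \eqref{eq:ps_1_rate-bounds-n} follows by applying \eqref{eq:ps_1_rate-bounds} to the $n$-fold instance $(p_{X^nY^n},p_{Z^n|X^nY^n})$, whose characteristic graph is $G^n$, exactly as \Corollaryref{ps_2_rate-bounds-n} is deduced.

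\textbf{Asymptotic achievability} $R^{\AS_2}\le H_G(X|Y)$. I would invoke the Orlitsky--Roche scheme underlying \Definitionref{conditional-graph-entropy}. Fix a minimizer $p_{W|X}$ attaining $H_G(X|Y)$, with $W-X-Y$ and $X\in W\in\varGamma(G)$. Alice draws $W^n$ and transmits a binning index at rate $I(W;X|Y)=H_G(X|Y)$, so that Bob, using $Y^n$ as side information, recovers an independent set $W_i\ni X_i$ in each coordinate with vanishing error. Because $W_i$ is independent in $G$, all $x\in W_i$ with $p_{XY}(x,Y_i)>0$ share the same $p_{Z|XY}(\cdot\mid x,Y_i)$, so Bob samples $\hat Z_i$ from this common distribution and is correct whenever recovery succeeds. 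Privacy against Alice is automatic: the encoding depends only on $p_{X^nY^n}$ and is oblivious to $p_{Z|XY}$, so $M$ is a function of Alice's input and private randomness, giving $M-X^n-Y^n$ and hence $I(M;Y^n,\hat Z^n|X^n)\to0$ (the residual term $I(M;\hat Z^n\mid X^n,Y^n)$ vanishes because $\hat Z^n$ is sampled afresh from the correct conditional law).

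\textbf{Asymptotic converse} $R^{\AS_2}\ge H_G(X|Y)$. Here I would reuse the single-letterization of the lower bound in \Theoremref{as_2_rate-region}. Starting from $\mathbb{E}[|M|]\ge H(M)\ge H(M\mid Y^n)\ge I(M;X^n\mid Y^n)$, I expand coordinate-wise and introduce the auxiliary $M'=(M,Y_{-T},T)$ with $T$ uniform and independent, reaching a bound of the form $n\big(I(M';X_T\mid Y_T)-o(1)\big)$. Privacy against Alice \eqref{eq:as_2_privacy-alice}, single-letterized as in \Lemmaref{as-implies-ss}, supplies the Markov chain $M'-X_T-Y_T$ required by \Definitionref{conditional-graph-entropy}.

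The main obstacle is the final step: showing the auxiliary may be taken to be an \emph{independent set} of $G$ containing $X_T$, i.e.\ enforcing $X\in W\in\varGamma(G)$ so that $I(M';X_T\mid Y_T)\ge H_G(X_T\mid Y_T)$. Intuitively, for each value of $Y_T$ the $X$-values that Bob cannot tell apart when producing $\hat Z_T$ must avoid edges of $G$ (adjacent vertices have different output distributions, so confusing them causes error); hence the conditional support of $X_T$ given $(M',Y_T)$ is forced to be an independent set. Converting this clean zero-error intuition into the vanishing-error regime is the delicate part: I would handle it with the same compactness/continuity device used for \eqref{eq:as2-interim1} (passing $\epsilon\to0$ through the compact constraint set of \eqref{eq:set-for-ss}), so that in the limit the minimizer obeys the exact independent-set and Markov constraints defining $H_G(X|Y)$, yielding the matching lower bound.
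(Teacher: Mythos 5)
Your overall architecture matches the paper's: the perfect-security bounds are obtained by transplanting \Claimref{color-equiv-protocol} and \Theoremref{ps_2_rate-bounds} to the graph $G$ on $\X$, and the achievability is the Orlitsky--Roche scheme; both of these are exactly what the paper does. The gap is in the converse, at precisely the step you flag as ``the delicate part.'' The compactness/continuity device of \eqref{eq:as2-interim1} only gets you from the $\epsilon$-relaxed constraints to the \emph{exact} Markov chains $U-X-(Y,Z)$ and $Z-(U,Y)-X$; it cannot manufacture the constraint $X\in W\in\varGamma(G)$, because no independent-set condition ever appears in the compact constraint sets whose nested intersection you are taking. What is still needed after the limit is the identity
\[
\min_{\substack{p_{U|X}:\; U-X-(Y,Z)\\ Z-(U,Y)-X}} I(U;X|Y) \;=\; \min_{\substack{p_{W|X}:\; W-X-Y\\ X\in W\in \varGamma(G)}} I(W;X|Y),
\]
and in particular its ``$\geq$'' direction, which is a combinatorial argument, not a limiting one.

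The paper supplies this as \Lemmaref{cond-graph-entropy-equi}: given an optimal $U$, define $W=w(U)$ with $w(u):=\{x: p_{UX}(u,x)>0\}$; then $W-X-Y$ and $X\in W$ are immediate, and each $w(u)$ is an independent set because if adjacent $x,x'\in w(u)$ existed, expanding $p_{UZ|XY}(u,z|x,y)$ in two ways (once via $U-X-(Y,Z)$, once via $Z-(U,Y)-X$) would force $p_{Z|XY}(z|x,y)=p_{Z|UY}(z|u,y)=p_{Z|XY}(z|x',y)$, contradicting adjacency. Your intuition paragraph states the right conclusion, but the tool you propose to justify it is the wrong one, and the actual proof needs both Markov chains --- including $\hat Z_T-(M',Y_T)-X_T$, which you drop from your single-letterized constraint list (you carry only $M'-X_T-Y_T$). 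Without that second chain the support of $p_{X|U=u}$ need not be an independent set, and the bound $I(M';X_T|Y_T)\geq H_G(X|Y)$ does not follow.
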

\begin{proof}
Bounds on $R^{\PS_2}$ in \eqref{eq:ps_1_rate-bounds} and $R_n^{\PS_2}$ in \eqref{eq:ps_1_rate-bounds-n} can be proved along the lines of the proofs of \Theoremref{ps_2_rate-bounds} and \Corollaryref{ps_2_rate-bounds-n}, respectively.
Now we prove \eqref{eq:as_1_rate-region}.

{\bf Lower bound $R^{\AS_2}\geq H_G(X|Y)$:} 
This can be proved along the lines of the lower bound proof of \Theoremref{as_2_rate-region}, except that we do not have privacy against Bob here. Note that our lower bound is for protocols with multiple rounds of interaction. Following the proof in \Theoremref{as_2_rate-region} until \eqref{eq:as2-interim1-5} gives:
\begin{align}
H(M)&\geq n\cdot \displaystyle \min_{\substack{p_{U|X}: \\ U-X-(Y,Z) \\ Z-(U,Y)-X}} I(U;X|Y) - \epsilon''- \epsilon_3 \notag\\
&= n\cdot \displaystyle \min_{\substack{p_{W|X}: \\ W-X-Y \\ X\in W\in \varGamma(G)}} I(W;X|Y) - \epsilon''- \epsilon_3 \label{eq:as1-interim3} \\
&= n\cdot (H_G(X|Y)-\epsilon''-\epsilon_3) \notag
\end{align}
By taking $\epsilon\to0$, we get $R^{\AS_2}\geq H_G(X|Y)$. \eqref{eq:as1-interim3} is proved in \Lemmaref{cond-graph-entropy-equi} in \Appendixref{app-oneSidedPrivacy}

{\bf Upper bound $R^{\AS_2}\leq H_G(X|Y)$:} In our achievabe scheme Alice sends only one message to Bob, and then Bob computes the function. Achievability is along the lines of the coding scheme given by Orlitsky and Roche \cite{OrlitskyRoche}. They showed, by a random coding argument, existence of a sequence of codebooks with rate approaching $H_G(X|Y)$, in which probability of error in function computation goes to zero as $n\to\infty$. Fix such a sequence of codebooks. It is shown in \cite{OrlitskyRoche}, that with high probability, after decoding the message Bob obtains a sequence $W^n$ of independent sets in $G$ that is robustly jointly typical with Alice's input sequence $X^n$, which implies that $X_i\in W_i, \forall i\in[n]$. By definition, if $w$ is an independent set in $G$, then $\forall (y,z)\in\Y\times\Z$, $p_{Z|XY}(z|x,y)$ is same for all $x$'s with $p(x,y)>0$, which means that $z$ can be computed from $(w,y)$. So, once Bob, who has $y^n$ as his input, has recovered $w^n$, he can sample $z_i, i\in[n]$, according to $p_{Z|XY}(z_i|x_i,y_i)$ for any $x_i\in w_i$ with $p(x_i,y_i)>0$. Privacy against Alice follows from the fact that the message that Alice sends to Bob is a deterministic function of her input.
\end{proof}
\section*{Acknowledgements}
The author would like to gratefully thank Vinod M. Prabhakaran for many helpful discussions and his help in improving the presentation of this paper.
The research was supported in part by a Microsoft Research India Ph.D. Fellowship and ERC under the EU's Seventh Framework Programme (FP/2007-2013) ERC Grant Agreement n. 307952.
\bibliographystyle{IEEEtran}
\bibliography{crypto}

\appendices
\section{Proofs Omitted from \Sectionref{twoSidedPrivacy}}\label{app:app-twoSidedPrivacy}

\begin{lem}\label{lem:compactness}
For every $\epsilon>0$, the set $\mathcal{S}_{\epsilon}$ defined in \eqref{eq:set-for-ss} is compact, i.e., closed and bounded.
\end{lem}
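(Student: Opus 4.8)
The plan is to show that $\mathcal{S}_{\epsilon}$ is a closed and bounded subset of a finite-dimensional Euclidean space, from which compactness follows by the Heine--Borel theorem. First I would fix the ambient space. Each element of $\mathcal{S}_{\epsilon}$ is a conditional distribution $p_{M\hat{Z}|XY}$. The key observation is that we may restrict attention to transcripts $M$ taking values in a finite alphabet of bounded size. Indeed, by the data-processing structure of the problem (and as is implicit in the reduction to a single message from Alice in the perfect-security analysis via \Claimref{color-equiv-protocol}), it suffices to consider $M$ ranging over an alphabet whose cardinality is bounded by a function of $|\X|,|\Y|,|\Z|$ alone; a cardinality bound of this type can be obtained by a standard support-size argument (e.g.\ via Carath\'eodory's theorem applied to the relevant mutual-information and $L_1$ constraints). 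Once $M$ lives in a fixed finite alphabet $\mathcal{M}$, every candidate $p_{M\hat{Z}|XY}$ is a point in $\mathbb{R}^{|\mathcal{M}||\Z||\X||\Y|}$, which is the finite-dimensional setting we need.

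Next I would verify boundedness, which is immediate: every coordinate of $p_{M\hat{Z}|XY}$ is a probability lying in $[0,1]$, so $\mathcal{S}_\epsilon$ is contained in a product of unit intervals and is therefore bounded. The substantive part is closedness. The set $\mathcal{S}_{\epsilon}$ is defined by three constraints, namely $I(M;Y,\hat{Z}|X)\leq \epsilon$, $I(M;X|Y,\hat{Z})\leq \epsilon$, and $\mathbb{E}_{XY}\|p_{\hat{Z}|XY}-p_{Z|XY}\|_1\leq \epsilon$. Each of these is of the form $f(p_{M\hat{Z}|XY})\leq \epsilon$ for some function $f$. The plan is to argue that each such $f$ is continuous on the (compact) simplex of conditional distributions, so that each constraint set is the preimage of the closed set $(-\infty,\epsilon]$ under a continuous map and is hence closed; $\mathcal{S}_{\epsilon}$, being the intersection of three closed sets, is then closed.

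For the continuity claims I would proceed constraint by constraint. The two mutual-information quantities are fixed sums of terms of the form $p\log p$ in the joint and marginal probabilities, and since the input marginal $p_{XY}$ is fixed, each mutual information is a continuous function of the coordinates of $p_{M\hat{Z}|XY}$ wherever these expressions are defined; the only delicate point is continuity at the boundary where some probability vanishes, which is handled by the standard convention $0\log 0 = 0$ together with the fact that $t\log t \to 0$ as $t\to 0^+$, making the relevant entropy functions continuous on the whole closed simplex. Continuity of the $L_1$ term is even easier, as $\|\cdot\|_1$ is a norm and the expectation is a finite nonnegative combination (with fixed weights $p_{XY}(x,y)$) of such norms, hence continuous. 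The main obstacle, and the step deserving the most care, is pinning down the finite cardinality bound on $\mathcal{M}$: without it the ambient space is infinite-dimensional and Heine--Borel does not apply, and the subsequent nested-intersection argument invoking Cantor's intersection theorem (used in the proof of \Theoremref{as_characterization}) would not go through. Once the support-size reduction is established, the remaining continuity and boundedness verifications are routine.
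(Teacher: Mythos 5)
Your proposal is correct and follows essentially the same route as the paper: a Carath\'eodory/Fenchel--Eggleston cardinality bound on the alphabet of $M$ to place everything in a finite-dimensional simplex, boundedness from the probability-vector structure, and closedness from continuity of mutual information and the $L_1$ term (the paper verifies closedness via convergent sequences rather than preimages of closed sets, but these are equivalent). You correctly identify the support-size bound as the load-bearing step, which the paper also relies on (citing Fenchel--Eggleston) without further elaboration.
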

\begin{proof}
{\bf Boundedness:} Since $\X$ and $\Y$ are of finite cardinality, it follows from the Fenchel-Eggleston's strengthening of Carath\'eodory's theorem \cite[pg. 310]{CsiszarKorner81} that it is sufficient to consider the $p_{M\hat{Z}|XY}$'s in $\mathcal{S}_{\epsilon}$, where $M$ takes at most $\X\cdot\Y+2$ values. Now, since each $p_{M\hat{Z}|XY}$ is a finite dimensional probability vector, it follows that the set $\mathcal{S}_{\epsilon}$ is bounded.

{\bf Closedness:} Consider any convergent sequence $(p_{M\hat{Z}|XY}^{(k)})_{k\in\mathbb{N}}$, where $p_{M\hat{Z}|XY}^{(k)}\in\mathcal{S}_{\epsilon}, \forall k\in\mathbb{N}$. Let $q_{M\hat{Z}|XY}=\lim_{k\to\infty}p_{M\hat{Z}|XY}^{(k)}$. In the following we abbreviate $p_{M\hat{Z}|XY}^{(k)}$ and $q_{M\hat{Z}|XY}$ by $p^{(k)}$ and $q$, respectively. To prove closedness of $\mathcal{S}_{\epsilon}$ we need to show that $q\in\mathcal{S}_{\epsilon}$, i.e., $I(M;Y,\hat{Z}|X)|_q\leq \epsilon$, $I(M;X|Y,\hat{Z})|_q\leq \epsilon$, and $\mathbb{E}_{XY}||q_{\hat{Z}|XY}-q_{Z|XY}||_1 \leq \epsilon$ hold. Consider the first term $I(M;Y,\hat{Z}|X)$. Since mutual information is a continuous function of the distribution, it follows from the definition of continuous function, that if $(p^{(k)})_{k\in\mathbb{N}}$ converges to $q$, then $(I(M;Y,\hat{Z}|X)_{p^{(k)}})_{k\in\mathbb{N}}$ must converge to $I(M;Y,\hat{Z}|X)_q$. And by the definition of convergence, if  $I(M;Y,\hat{Z}|X)_{p^{(k)}}\leq \epsilon, \forall k\in\mathbb{N}$, then it cannot be the case that $I(M;Y,\hat{Z}|X)_q>\epsilon$. Similarly we can prove that $I(M;X|Y,\hat{Z})|_q\leq \epsilon$. Since $L_1$-norm also a continuous function, $\mathbb{E}_{XY}||q_{\hat{Z}|XY}-q_{Z|XY}||_1 \leq \epsilon$ also follows similarly. So we have $q\in\mathcal{S}_{\epsilon}$, and this concludes the proof.
\end{proof}

\begin{lem}\label{lem:as-implies-ss}
Let $\epsilon>0$ be a fixed constant. If an asymptotically secure protocol $\Pi_n$ satisfies \eqref{eq:as_2_correctness}-\eqref{eq:as_2_privacy-bob}, then there exists another protocol $\Pi_{\epsilon'}$, where $\epsilon'\to0$ as $\epsilon\to0$, that satisfies \eqref{eq:ss_2_correctness}-\eqref{eq:ss_2_privacy-bob}.
\end{lem}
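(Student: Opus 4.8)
The plan is to deduce the single-instance conditions \eqref{eq:ss_2_correctness}--\eqref{eq:ss_2_privacy-bob} from the block conditions \eqref{eq:as_2_correctness}--\eqref{eq:as_2_privacy-bob} by exactly the single-letterization device used in the lower bound of \Theoremref{as_2_rate-region}. First I would introduce a time-sharing index $T$ uniformly distributed on $[n]$ and independent of $(M,X^n,Y^n,Z^n,\hat{Z}^n)$, and declare the single instance of $\Pi_{\epsilon'}$ to be $(X,Y,Z,\hat{Z}):=(X_T,Y_T,Z_T,\hat{Z}_T)$, which has the correct law $p_{XY}p_{Z|XY}$ on the genuine coordinates because the inputs are i.i.d. The transcript of $\Pi_{\epsilon'}$ would be $M':=(M,V,T)$, where $V$ is a block of ``genie'' off-coordinate data (drawn from $X_{-T},Y_{-T},\hat{Z}_{-T}$) chosen so that the block mutual informations split into per-coordinate pieces under the chain rule. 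Because $\hat{Z}_T$ is Bob's output computed from his view, the Markov chain $\hat{Z}-(M',Y)-X$ is inherited, so $p_{M'\hat{Z}|XY}$ is a bona fide single-instance protocol.

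For correctness I would use that marginalizing and averaging do not increase $L_1$ distance: for each $i$, $\mathbb{E}_{X_iY_i}\|p_{\hat{Z}_i|X_iY_i}-p_{Z_i|X_iY_i}\|_1$ is bounded (by convexity) by the joint quantity $\mathbb{E}_{X^nY^n}\|p_{\hat{Z}^n|X^nY^n}-p_{Z^n|X^nY^n}\|_1\le\epsilon$, and averaging over the uniform $T$ yields \eqref{eq:ss_2_correctness}. For the two privacy conditions I would expand each block quantity by the chain rule into $\sum_i I(M;A_i\mid B_i,C_i)$ with the appropriate live pair $A_i$, conditioning coordinate $B_i$, and context $C_i$, and then absorb the context into the auxiliary via the identity $I(M;A_i\mid B_i,C_i)=I(M,C_i;A_i\mid B_i)-I(C_i;A_i\mid B_i)$. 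After dividing by $n$ and recognizing the average over $i$ as conditioning on $T$, the first terms collapse to the single-instance quantities $I(M';Y,\hat{Z}\mid X)$ and $I(M';X\mid Y,\hat{Z})$, while the subtracted ``cross'' terms $\sum_i I(C_i;A_i\mid B_i)$ vanish identically once $\hat{Z}^n$ is replaced by the ideal $Z^n$ (by the i.i.d.\ product structure) and are therefore $O(n\epsilon)$ for the actual near-ideal $\hat{Z}^n$ by continuity of mutual information --- this is precisely the estimate packaged in \Lemmaref{small}. Dividing through by $n$ then gives \eqref{eq:ss_2_privacy-alice} and \eqref{eq:ss_2_privacy-bob} with a single $\epsilon'$ that is a fixed affine combination of $\epsilon$ and the continuity slack, so $\epsilon'\to0$ as $\epsilon\to0$.

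The main obstacle is not any individual inequality but the requirement that \emph{one and the same} auxiliary $M'$ linearize all three block quantities at once: privacy against Alice is conditioned on $X^n$ whereas privacy against Bob is conditioned on $(Y^n,\hat{Z}^n)$, and the chain rule supplies each with a prefix context $C_i$ rather than the full off-coordinate context, so $V$ must be selected to be simultaneously ``free'' in both expansions while keeping the discarded future/past coordinates under control. Reconciling these mismatched conditionings is exactly what produces the cross terms, and establishing that they are genuinely $o(n)$ --- i.e.\ that replacing the near-ideal $\hat{Z}^n$ by $Z^n$ perturbs each mutual-information term by a vanishing amount --- is the delicate point. I would discharge it through the continuity-of-information argument of \Lemmaref{small} rather than any direct combinatorial estimate, since it is the only place where the approximate correctness \eqref{eq:as_2_correctness} has to be transported into the privacy bounds.
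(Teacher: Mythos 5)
Your proposal is correct and follows essentially the same route as the paper's proof: single-letterization via a uniform time-sharing index $T$, the auxiliary $M'=(M,Y_{-T},\hat{Z}^{T-1},T)$, chain-rule expansions whose cross terms are controlled by \Lemmaref{small}, and the observation that marginalizing/averaging does not increase the $L_1$ distance for \eqref{eq:ss_2_correctness}. The paper additionally makes the resulting single-instance protocol explicit (embed $(X,Y)$ at the random coordinate $T$, pad with fresh i.i.d.\ copies, run $\Pi_n$, output $\hat{Z}_T$), which is exactly the construction you describe implicitly.
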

\begin{proof}
{\allowdisplaybreaks
First we show that \eqref{eq:as_2_privacy-alice} implies \eqref{eq:ss_2_privacy-alice}, \eqref{eq:as_2_privacy-bob} implies \eqref{eq:ss_2_privacy-bob}, and \eqref{eq:as_2_correctness} implies \eqref{eq:ss_2_correctness}.
\begin{align*}
n\epsilon &\geq I(M;Y^n,\hat{Z}^n|X^n) \\
&= \sum_{i=1}^n I(M;Y^n,\hat{Z}_i|\hat{Z}^{i-1},X^n) \\
&\geq \sum_{i=1}^n I(M;Y_i,\hat{Z}_i|Y_{-i},\hat{Z}^{i-1},X^n) \\
&= \sum_{i=1}^n I(M,X_{-i}Y_{-i}\hat{Z}^{i-1};Y_i,\hat{Z}_i|X_i) \\
&\hspace{3cm} - I(X_{-i}Y_{-i}\hat{Z}^{i-1};Y_i,\hat{Z}_i|X_i) \\
&\geq \sum_{i=1}^n I(\underbrace{M,Y_{-i}\hat{Z}^{i-1}}_{=\ M_i};Y_i,\hat{Z}_i|X_i) \\
&\hspace{3cm} - \underbrace{\sum_{i=1}^n I(X_{-i}Y_{-i}\hat{Z}^{i-1};Y_i,\hat{Z}_i|X_i)}_{\leq\ n\epsilon_1, \text{ By \Lemmaref{small}}} \\
&\geq \sum_{i=1}^n I(M_i;Y_i,\hat{Z}_i|X_i) - n\epsilon_1 \\
&= n\cdot \sum_{i=1}^n \frac{1}{n}\cdot I(M_i;Y_i,\hat{Z}_i|X_i,T=i) - n\epsilon_1 \\
&= n\cdot I(M_T;Y_T,\hat{Z}_T|X_T,T) - n\epsilon_1 \\
&= n\cdot I(\underbrace{M_T,T}_{=\ M'};Y_T,\hat{Z}_T|X_T) - n\epsilon_1
\end{align*}
where $T$ is distributed uniformly in $\{1,2,\hdots,n\}$ and is independent of $(M,X^n,Y^n,\hat{Z}^n)$. So we have $I(M';Y_T,\hat{Z}_T|X_T) \leq \epsilon+\epsilon_1$.
\begin{align*}
n\epsilon &\geq I(M;X^n|Y^n,\hat{Z}^n) \\
&= \sum_{i=1}^n I(M;X_i|X^{i-1},Y^n,\hat{Z}^n) \\
&= \sum_{i=1}^n I(M,X^{i-1},Y_{-i},\hat{Z}_{-i};X_i|Y_i,\hat{Z}_i) \\
&\hspace{3cm}- \underbrace{\sum_{i=1}^n I(X^{i-1},Y_{-i},\hat{Z}_{-i};X_i|Y_i,\hat{Z}_i)}_{\leq\ n\epsilon_2, \text{ By \Lemmaref{small}}} \\
&\geq \sum_{i=1}^n I(\underbrace{M,Y_{-i},\hat{Z}^{i-1}}_{=\ M_i};X_i|Y_i,\hat{Z}_i) - n\epsilon_2 \\
&= n\cdot\sum_{i=1}^n \frac{1}{n}\cdot I(M_i;X_i|Y_i,\hat{Z}_i,T=i) - n\epsilon_2 \\
&= n\cdot I(M_T;X_T|Y_T,\hat{Z}_T,T) - n\epsilon_2 \\
&= n\cdot I(\underbrace{M_T,T}_{=\ M'};X_T|Y_T,\hat{Z}_T) - n\epsilon_2
\end{align*}
So we have $I(M';X_T|Y_T,\hat{Z}_T)\leq \epsilon+\epsilon_2$.
\begin{align*}
&\mathbb{E}_{X_TY_T}[||p_{\hat{Z}_T|X_TY_T}-p_{Z_T|X_TY_T}||_1] \\
&= \sum_{x,y}p_{X_TY_T}(x,y)||p_{\hat{Z}_T|X_TY_T}-p_{Z_T|X_TY_T}||_1 \\
&= \sum_{x,y}p_{X_TY_T}(x,y)\sum_{z}|p_{\hat{Z}_T|X_TY_T}(z|x,y)-p_{Z_T|X_TY_T}(z|x,y)| \\
&= \sum_{x,y,z}|p_{X_TY_T\hat{Z}_T}(x,y,z)-p_{X_TY_TZ_T}(x,y,z)| \\
&= \sum_{x,y,z}|\displaystyle \sum_{\substack{x^n,y^n,z^n:\\x_T=x,\\y_T=y,\\z_T=z}}p_{X^nY^n\hat{Z}^n}(x^n,y^n,z^n)-p_{X^nY^nZ^n}(x^n,y^n,z^n)| \\
&\leq \sum_{x^n,y^n,z^n}|p_{X^nY^n\hat{Z}^n}(x^n,y^n,z^n)-p_{X^nY^nZ^n}(x^n,y^n,z^n)| \\
&= \sum_{x^n,y^n}p_{X^nY^n}(x^n,y^n)\times\\
&\hspace{1.5cm} \sum_{z^n}|p_{\hat{Z}^n|X^nY^n}(z^n|x^n,y^n)-p_{Z^n|X^nY^n}(z^n|x^n,y^n)| \\
&= \sum_{x^n,y^n}p_{X^nY^n}(x^n,y^n)||p_{\hat{Z}^n|X^nY^n}-p_{Z^n|X^nY^n}||_1 \\
&= \mathbb{E}_{X^nY^n}[||p_{\hat{Z}^n|X^nY^n}-p_{Z^n|X^nY^n}||_1] \\
&\leq \epsilon.
\end{align*}
Now take $\epsilon'=\max\{\epsilon+\epsilon_1, \epsilon+\epsilon_2\}$. Clearly, $\epsilon'\to0$ as $\epsilon\to0$.

To get a protocol $\Pi_{\epsilon'}$ for a given $(p_{XY},p_{Z|XY})$ that satisfies \eqref{eq:ss_2_correctness}-\eqref{eq:ss_2_privacy-bob}, we take a random index $T\in\{1,2,\hdots,n\}$ and fix $(X_T,Y_T)=(X,Y)$. Next we pick $(n-1)$ i.i.d. copies of $(X,Y)$ and fill the remaining indices of $(X^n,Y^n)$ by these $(n-1)$ copies. Now we have $(X^n,Y^n)$ with $(X_T,Y_T)=(X,Y)$. Run the protocol $\Pi_n$ (which is for asymptotically secure computation) with these inputs. We get the output $\hat{Z}^n$. Set $Z=\hat{Z}_T$. By the above calculations, it follows that $\Pi_{\epsilon'}$ is a protocol for $(p_{XY},p_{Z|XY})$ that satisfies \eqref{eq:ss_2_correctness}-\eqref{eq:ss_2_privacy-bob}. This concludes the proof.
}
\end{proof}

\begin{lem}\label{lem:small}
\begin{align}
\sum_{i=1}^n I(X_{-i},Y_{-i},\hat{Z}^{i-1};Y_i,\hat{Z}_i|X_i) &\leq n\epsilon_1, \label{eq:small-1}\\
\sum_{i=1}^n I(X^{i-1},Y_{-i},\hat{Z}_{-i};X_i|Y_i,\hat{Z}_i) &\leq n\epsilon_2, \label{eq:small-2}\\
\sum_{i=1}^n I(Y_{-i},\hat{Z}^{i-1};X_i,\hat{Z}_i|Y_i) &\leq n\epsilon_3, \label{eq:small-3}
\end{align}
where $\epsilon_1,\epsilon_2,\epsilon_3\to 0$ as $\epsilon\to0$.
\end{lem}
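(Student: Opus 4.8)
The plan is to prove all three inequalities by the same two-step recipe: first show that each summand \emph{vanishes identically} when $\hat Z^n$ is replaced by the true output $Z^n$, and then argue that, because $\hat Z^n$ is $L_1$-close to $Z^n$, the actual summands are small, with the closeness controlled uniformly per coordinate so that the sums scale like $n\epsilon_k$.

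Step one, vanishing at the true distribution. I would exploit the product structure $p_{X^nY^nZ^n}=\prod_{i=1}^n p_{XY}(x_i,y_i)p_{Z|XY}(z_i|x_i,y_i)$, under which the triples $(X_i,Y_i,Z_i)$ are mutually independent across $i$. Take the first sum as a template: in $I(X_{-i},Y_{-i},Z^{i-1};Y_i,Z_i\mid X_i)$ every variable in the first slot lies in coordinates $\ne i$, while $(Y_i,Z_i)$ lies in coordinate $i$; since coordinate $i$ is independent of all the others, conditioning on $X_i$ leaves $(Y_i,Z_i)$ independent of $(X_{-i},Y_{-i},Z^{i-1})$, so the term is $0$. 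The identical coordinate-separation argument kills every summand of the other two sums as well (in \eqref{eq:small-2} the measured variable is $X_i$ conditioned on $(Y_i,Z_i)$; in \eqref{eq:small-3} it is $(X_i,Z_i)$ conditioned on $Y_i$), so all three sums equal $0$ under $p_{X^nY^nZ^n}$.

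Step two, perturbation by continuity. The correctness hypothesis \eqref{eq:as_2_correctness} gives $\| p_{X^nY^n\hat Z^n}-p_{X^nY^nZ^n}\|_1 = \mathbb{E}_{X^nY^n}||p_{\hat Z^n|X^nY^n}-p_{Z^n|X^nY^n}||_1\le\epsilon$, since $X^n,Y^n$ are unchanged. Because marginalization does not increase $L_1$ distance, for each $i$ the joint law of the variables appearing in the $i$-th summand is within $\epsilon$ of its counterpart under $Z^n$. I would then invoke continuity of conditional mutual information: writing each summand as $I(A;B\mid C)=H(B\mid C)-H(B\mid A,C)$, the \emph{measured} variable $B$ always ranges over a fixed, $n$-independent alphabet ($\mathcal Y\times\mathcal Z$, $\mathcal X$, or $\mathcal X\times\mathcal Z$). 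Hence a conditional-entropy continuity bound whose error depends only on $|B|$ and on the $L_1$ distance---not on the (exponentially large) conditioning alphabet---yields $|I(A;B\mid C)_{\hat Z}-I(A;B\mid C)_Z|\le g(\epsilon)$ for a single function $g$ with $g(\epsilon)\to0$. Combined with Step one, each summand is at most $g(\epsilon)$, so each sum is at most $n\,g(\epsilon)=n\epsilon_k$ with $\epsilon_k:=g(\epsilon)\to0$.

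The main obstacle is precisely the alphabet dependence in Step two: a naive Fannes bound applied to the full joint would carry a $\log$ of an alphabet of size exponential in $n$, giving a useless $O(n^2\epsilon)$ estimate. The crux is therefore to use a conditional-entropy (Alicki--Fannes--Winter type) continuity estimate whose error is governed only by the small, fixed alphabet of the measured variable; this is what makes the uniform per-coordinate bound---and hence the $n\epsilon_k$ scaling---go through.
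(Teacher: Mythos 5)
Your proof is correct and follows the same two-step skeleton as the paper's: show each summand vanishes when $\hat{Z}^n$ is replaced by $Z^n$ (by the coordinate-wise independence of the product distribution $p_{X^nY^nZ^n}$), then perturb using the $L_1$-closeness guaranteed by \eqref{eq:as_2_correctness}. Where you genuinely diverge is in the second step. The paper simply invokes ``continuity of mutual information'' to get a per-term bound $\epsilon_{1i}$ and sets $\epsilon_1=\max_i\epsilon_{1i}$; it never addresses the fact that the distributions involved live on alphabets whose size grows with $n$ (and that $n$ itself grows as $\epsilon\to 0$ in the regime where the lemma is applied), so the modulus of continuity could a priori degrade with $n$ and the claimed $n\epsilon_k$ scaling would not follow. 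You identify exactly this issue and resolve it: by writing each summand as $H(B\mid C)-H(B\mid A,C)$ with the measured variable $B$ ranging over a fixed, $n$-independent alphabet ($\Y\times\Z$, $\X$, or $\X\times\Z$), and applying a conditional-entropy continuity estimate of Alicki--Fannes--Winter type whose error depends only on $|B|$ and the $L_1$ distance (not on the conditioning system), you obtain a single modulus $g(\epsilon)$ uniform over $i$ and $n$. This makes your argument a strictly more careful version of the paper's; the paper's proof is at best incomplete on the uniformity point, and your addition is precisely what is needed to make the $\max_i\epsilon_{ki}$ step legitimate.
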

\begin{proof}
Consider the $i$th term $I(X_{-i},Y_{-i},\hat{Z}^{i-1};Y_i,\hat{Z}_i|X_i)$ in the left hand 
side of \eqref{eq:small-1}. Notice that if we replace $\hat{Z}$ by $Z$, then this term will 
vanish. Now, since mutual information is a continuous function of the distribution, and 
$||p_{X^nY^n\hat{Z}^n}-p_{X^nY^nZ^n}||_1\leq \epsilon$, we have 
$I(X_{-i},Y_{-i},\hat{Z}^{i-1};Y_i,\hat{Z}_i|X_i)\leq \epsilon_{1i}$, 
where $\epsilon_{1i}\to 0$ as $\epsilon\to 0$. This is true for every $i\in[n]$.
Now let $\epsilon_1:=\max_{i\in [n]}\epsilon_{1i}$. Clearly $\epsilon_1\to0$ as $\epsilon\to0$.
Similar we can prove \eqref{eq:small-2} and \eqref{eq:small-3}.
\end{proof}

\begin{proof}[Remaining proof of \Theoremref{as_2_rate-region}]

{\bf Explanation for \eqref{eq:as2-interim1}:} For a fixed $(p_{XY},p_{Z|XY})$, let us define the following function:
\begin{align}
R_{\epsilon}(p_{XY},p_{Z|XY}) := \min_{\substack{p_{U\hat{Z}|XY}: \\ I(U;Y,\hat{Z}|X)\leq\epsilon \\ I(U;X|Y,\hat{Z})\leq\epsilon \\ I(\hat{Z};X|U,Y)\leq\epsilon \\ \mathbb{E}||p_{\hat{Z}|XY}-p_{Z|XY}||_1\leq\epsilon}} I(U;X,\hat{Z}|Y). \label{eq:function-ss}
\end{align}
Note that the expression in \eqref{eq:as2-interim} can be lower-bounded by $R_{\epsilon}(p_{XY},p_{Z|XY})-\epsilon_3$, because we relax one of the Markov chains in the definition of $R_{\epsilon}$. Also note that $R_0$ corresponds to perfectly secure protocols. To prove the inequality in \eqref{eq:as2-interim1}, it suffices to show that the function $R_{\epsilon}(p_{XY},p_{Z|XY})$ is right continuous at $\epsilon=0$. In order to show this we define a rate-region tradeoff corresponding to this function as follows:
\begin{align}
&\mathcal{T}(p_{XY},p_{Z|XY}) := \{(\epsilon,r): \exists p_{U\hat{Z}|XY} \text{ s.t. } I(U;X,\hat{Z}|Y)\leq r, \notag\\ &\hspace{2.5cm} I(U;Y,\hat{Z}|X)\leq\epsilon, I(U;X|Y,\hat{Z})\leq\epsilon, \notag\\&\hspace{2cm} I(\hat{Z};X|U,Y)\leq\epsilon, \mathbb{E}||p_{\hat{Z}|XY}-p_{Z|XY}||_1\leq\epsilon \}. \label{eq:rate-region-tradeoff}
\end{align}
For sake of simplicity, in the following we denote $R_{\epsilon}(p_{XY},p_{Z|XY})$ and $\mathcal{T}(p_{XY},p_{Z|XY})$ by $R_{\epsilon}$ and $\mathcal{T}$, respectively.
First we show that the above defined set $\mathcal{T}$ is a closed set. Then using the closedness of $\mathcal{T}$, we prove that the function $R_{\epsilon}$ is right continuous at $\epsilon=0$, which suffices to explain \eqref{eq:as2-interim1}.
\begin{lem}\label{lem:rate-region-closedness}
$\mathcal{T}(p_{XY},p_{Z|XY})$, as defined in \eqref{eq:rate-region-tradeoff}, is a closed set.
\end{lem}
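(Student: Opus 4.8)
The plan is to mimic the closedness argument already used in \Lemmaref{compactness}, now carried out in the two-dimensional parameter space of pairs $(\epsilon,r)$. I would take an arbitrary convergent sequence $(\epsilon_k,r_k)_{k\in\mathbb{N}}$ with each $(\epsilon_k,r_k)\in\mathcal{T}$ and limit $(\epsilon_k,r_k)\to(\epsilon^*,r^*)$, and then produce a single witness distribution certifying that the limit point also lies in $\mathcal{T}$. For each $k$, membership in $\mathcal{T}$ supplies a conditional distribution $p_{U\hat{Z}|XY}^{(k)}$ satisfying all five constraints in \eqref{eq:rate-region-tradeoff} with the parameters $\epsilon_k$ and $r_k$, and the goal is to extract a limiting law $q_{U\hat{Z}|XY}$ that witnesses $(\epsilon^*,r^*)\in\mathcal{T}$.

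First I would bound the alphabet of the auxiliary variable $U$ uniformly in $k$. Exactly as in the boundedness part of \Lemmaref{compactness}, the Fenchel-Eggleston strengthening of Carath\'eodory's theorem \cite[pg.~310]{CsiszarKorner81} lets me assume that every witness uses a $U$ taking at most a fixed number of values, depending only on $|\X|,|\Y|,|\Z|$, since all the constraints and the objective are governed by finitely many continuous functionals of the joint law that must be preserved. With this uniform cardinality bound in hand, every $p_{U\hat{Z}|XY}^{(k)}$ lives in one fixed compact finite-dimensional simplex.

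Next, by sequential compactness (Bolzano--Weierstrass) I would pass to a subsequence along which $p_{U\hat{Z}|XY}^{(k)}\to q_{U\hat{Z}|XY}$; the corresponding parameters still converge to $(\epsilon^*,r^*)$. Because mutual information and the $L_1$-norm are continuous functions of the underlying distribution, each constraint passes to the limit precisely as in the closedness argument of \Lemmaref{compactness}: from $I(U;Y,\hat{Z}|X)_{p^{(k)}}\le\epsilon_k$ together with $\epsilon_k\to\epsilon^*$ I obtain $I(U;Y,\hat{Z}|X)_q\le\epsilon^*$, and likewise for $I(U;X|Y,\hat{Z})$, for $I(\hat{Z};X|U,Y)$, and for the correctness term $\mathbb{E}||p_{\hat{Z}|XY}-p_{Z|XY}||_1$; similarly $I(U;X,\hat{Z}|Y)_{p^{(k)}}\le r_k$ with $r_k\to r^*$ yields $I(U;X,\hat{Z}|Y)_q\le r^*$. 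Hence $q_{U\hat{Z}|XY}$ witnesses $(\epsilon^*,r^*)\in\mathcal{T}$, establishing closedness.

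The main obstacle is precisely the uniform cardinality bound on $U$: without it the witnesses $p_{U\hat{Z}|XY}^{(k)}$ would range over simplices of unbounded dimension, and no subsequence/compactness argument would be available to produce a limiting law. I expect this step to require some care in specifying exactly which functionals of the joint distribution are to be held fixed when invoking the support lemma, namely the marginal $p_{XY\hat{Z}}$ that controls the correctness term together with the several mutual-information expressions, but once that bookkeeping is settled the continuity arguments are routine and identical in spirit to those of \Lemmaref{compactness}.
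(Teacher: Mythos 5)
Your proof is correct and uses essentially the same ingredients as the paper's: a uniform cardinality bound on $U$ via the Fenchel--Eggleston/Carath\'eodory support lemma, compactness of the resulting finite-dimensional simplex of witnesses, and continuity of mutual information and the $L_1$-norm to pass the constraints to the limit. The paper packages this as ``continuous image of a compact set is compact, and the increasing hull is therefore closed,'' whereas you unroll it into a direct subsequence-extraction argument, but the two are the same proof in substance.
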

\begin{proof}
Let $\mathcal{P}_{XY}$ denote the set of all conditional distributions $p_{U\hat{Z}|XY}$. Since $\X$ and $\Y$ are finite alphabets, it follows from the Fenchel-Eggleston's strengthening of Carath\'eodory's theorem \cite[pg. 310]{CsiszarKorner81} that we can upper-bound the cardinality of $|\mathcal{U}|$ by $|\X|\cdot|\Y|+2$. This implies that the set $\mathcal{P}_{XY}$ is compact, i.e., closed and bounded. 
For a fixed $(p_{XY},p_{Z|XY})$, consider the following function:
\begin{align*}
h(p_{U\hat{Z}|XY})&=(I(U;X,\hat{Z}|Y),I(U;Y,\hat{Z}|X),I(U;X|Y,\hat{Z}),\\
&\hspace{1.5cm} I(\hat{Z};X|U,Y),\mathbb{E}||p_{\hat{Z}|XY}-p_{Z|XY}||_1).
\end{align*}
Note that $\mathcal{T}(p_{XY},p_{Z|XY})$ is the increasing hull of \image\ of the above defined function $h$, where increasing hull of a set $\mathcal{S}\subseteq \mathbb{R}^5$ is defined as $\{(a_1,a_2,a_3,a_4,a_5)\in\mathbb{R}^5:\exists (a_1',a_2',a_3',a_4',a_5')\in\mathcal{S} \text{ s.t. } a_i'\leq a_i\ \forall i\in[5]\}$. Since increasing hull of a closed set is also a closed set, it is enough to prove that \image(h) is a closed set. Since mutual information and $L_1$-norm are continuous functions of distribution, and image of a compact set under a continuous function is always a compact set, and therefore closed. Now, closedness of \image(h) follows from the fact that the function $h$ is a continuous function of $p_{U\hat{Z}|XY}\in \mathcal{P}_{XY}$.
\end{proof}
\begin{lem}\label{lem:right-continuous}
If $(p_{XY},p_{Z|XY})$ is computable with asymptotic security, then the function $R_{\epsilon}$ defined in \eqref{eq:function-ss} is right continuous at $\epsilon=0$.
\end{lem}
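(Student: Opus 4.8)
The plan is to combine the monotonicity of $R_\epsilon$ in $\epsilon$ with the closedness of $\mathcal{T}$ already established in \Lemmaref{rate-region-closedness}. First I would observe that $R_\epsilon$ is non-increasing in $\epsilon$: increasing $\epsilon$ only relaxes the four constraints in the definition \eqref{eq:function-ss}, so the feasible set grows and the minimum can only decrease. Consequently the one-sided limit $r^\ast := \lim_{\epsilon\to0^+} R_\epsilon$ exists and equals $\sup_{\epsilon>0} R_\epsilon$; and since $R_\epsilon \le R_0$ for every $\epsilon>0$ by monotonicity, we immediately obtain $r^\ast \le R_0$.

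For the reverse inequality $r^\ast \ge R_0$, I would pass to the limit inside the closed set $\mathcal{T}$. Fix any sequence $\epsilon_n \downarrow 0$. Because the feasible set defining $R_{\epsilon_n}$ is compact (finite-dimensional, using the cardinality bound $|\mathcal{U}| \le |\X|\cdot|\Y|+2$ from the proof of \Lemmaref{rate-region-closedness}) and the objective $I(U;X,\hat{Z}|Y)$ is continuous, the minimum is attained, so each pair $(\epsilon_n, R_{\epsilon_n})$ is a genuine point of $\mathcal{T}$. By monotonicity $R_{\epsilon_n} \to r^\ast$, hence $(\epsilon_n, R_{\epsilon_n}) \to (0, r^\ast)$ in $\mathbb{R}^2$. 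Since $\mathcal{T}$ is closed by \Lemmaref{rate-region-closedness}, the limit point $(0, r^\ast)$ lies in $\mathcal{T}$. Unpacking this membership, there is a distribution $p_{U\hat{Z}|XY}$ for which the three conditional mutual informations and the $L_1$ term are all $\le 0$, hence exactly $0$ by non-negativity, while $I(U;X,\hat{Z}|Y) \le r^\ast$. This is precisely a feasible point for the $\epsilon=0$ problem, so $R_0 \le r^\ast$. Combining the two bounds yields $r^\ast = R_0$, which is right continuity at $\epsilon=0$.

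The only role of the hypothesis that $(p_{XY},p_{Z|XY})$ is computable with asymptotic security is to guarantee that $R_0$ is well defined and finite: by \Theoremref{as_characterization}, asymptotic computability is equivalent to perfect computability, so the $\epsilon=0$ feasible set is non-empty, containing the distribution induced by a perfectly secure protocol (for which the Markov chain $\hat{Z}-(U,Y)-X$ holds automatically, as Bob's output depends only on his input and the transcript). The step I expect to require the most care is the passage to the limit, namely verifying that each $(\epsilon_n, R_{\epsilon_n})$ is actually attained and therefore lies in $\mathcal{T}$; once that is in place, the closedness of $\mathcal{T}$ from \Lemmaref{rate-region-closedness} does all the remaining work, and no further analytic obstacle remains.
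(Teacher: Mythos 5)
Your proof is correct and follows essentially the same route as the paper: both arguments rest on the monotonicity of $R_{\epsilon}$ together with the closedness of $\mathcal{T}$ from \Lemmaref{rate-region-closedness} to place the limit point $(0,\lim_{\epsilon\to0^+}R_{\epsilon})$ inside $\mathcal{T}$ and thereby force $R_0\le\lim_{\epsilon\to0^+}R_{\epsilon}$. The paper casts this as a proof by contradiction while you argue directly (and you are more explicit about attainment of the minima, a detail the paper glosses over), but the mathematical content is identical.
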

\begin{proof}
We prove this using \Lemmaref{rate-region-closedness} and the fact that $R_0$ is bounded, which follows from the fact (\Theoremref{as_characterization}) that if $(p_{XY},p_{Z|XY})$ is computable with asymptotic security then it can be computable with perfect security.

Suppose, to the contrary, that $R_{\epsilon}$ is not right continuous at $\epsilon=0$. This implies that there exists a monotone decreasing sequence $(\epsilon_1,\epsilon_2,\hdots)$ satisfying $\lim_{i\to\infty}\epsilon_i=0$ and a constant $\beta>0$ such that $R_{\epsilon_i}\leq R_0-\beta, \forall i\in\mathbb{N}$. Observe that $R_{\epsilon_i}$ is a non-decreasing sequence that is bounded above by $R_0$ (and $R_0$ is finite by \Theoremref{as_characterization}). Since every monotone increasing sequence bounded above is convergent, it follows that $R_{\epsilon_i}$ is convergent. Let $l=\lim_{i\to\infty}R_{\epsilon_i}$. We have $l\leq R_0-\beta<R_0$. Since $\mathcal{T}$ is a closed set (i.e., it contains all its limit points), we have $l\in\mathcal{T}$. But this contradicts the fact that $R_0$ is the minimum value $r$ such that $(r,0,0,0,0)\in\mathcal{T}$.
\end{proof}
\end{proof}

\section{Details Omitted from \Sectionref{oneSidedPrivacy}}\label{app:app-oneSidedPrivacy}
\begin{lem}\label{lem:cond-graph-entropy-equi}
\[\displaystyle \min_{\substack{p_{U|X}: \\ U-X-(Y,Z) \\ Z-(U,Y)-X}} I(U;X|Y) = \displaystyle \min_{\substack{p_{W|X}: \\ W-X-Y \\ X\in W\in \varGamma(G)}} I(W;X|Y).\]
\end{lem}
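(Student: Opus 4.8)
The plan is to prove the claimed identity as a double inequality between the two optimization problems. Write $A$ for the left-hand minimum (over $p_{U|X}$ with $U-X-(Y,Z)$ and $Z-(U,Y)-X$) and $B$ for the right-hand minimum, which is exactly the conditional graph entropy $H_G(X|Y)$. I would show $A\le B$ and $B\le A$ separately, each time by taking an arbitrary feasible point of one problem and manufacturing a feasible point of the other with no larger objective value.

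First I would handle the easy direction $A\le B$. Given any feasible $p_{W|X}$ for the right-hand problem, set $U:=W$. Since $W$ is generated from $X$ alone, the full joint factors as $p_{XY}\,p_{Z|XY}\,p_{W|X}$, so $p(w\mid x,y,z)=p(w\mid x)$ and the chain $U-X-(Y,Z)$ holds automatically. The remaining chain $Z-(U,Y)-X$ is precisely the defining property of an independent set: for $W\in\varGamma(G)$ the value $p_{Z|XY}(z\mid x,y)$ is constant over all $x\in W$ with $p_{XY}(x,y)>0$, so $p(z\mid u,y,x)=p(z\mid u,y)$. Hence $U=W$ is feasible and attains the identical objective $I(U;X|Y)=I(W;X|Y)$, which yields $A\le B$ after minimizing over $W$.

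For the reverse inequality $B\le A$, given any feasible $p_{U|X}$ I would define the deterministic map $\phi(u):=\{x\in\X : p(u\mid x)>0\}$ and set $W:=\phi(U)$. By construction $p(w\mid x)>0$ forces $x\in w$, so $X\in W$; and since $W$ is a function of $U$, the chain $W-X-Y$ is inherited from $U-X-Y$. The crucial step is to verify $\phi(u)\in\varGamma(G)$, i.e.\ that $\phi(u)$ is an independent set of $G$. Here I combine both Markov chains: $U-X-(Y,Z)$ gives $p(z\mid u,y,x)=p_{Z|XY}(z\mid x,y)$, whereas $Z-(U,Y)-X$ gives $p(z\mid u,y,x)=p(z\mid u,y)$. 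Thus for any $x,x'\in\phi(u)$ compatible with a common $y$, both $p_{Z|XY}(z\mid x,y)$ and $p_{Z|XY}(z\mid x',y)$ equal $p(z\mid u,y)$ for every $z$, so $\{x,x'\}$ is not an edge of $G$. Therefore $\phi(u)$ is independent, $W$ is feasible, and since $W$ is a function of $U$ the data-processing inequality $I(W;X|Y)\le I(U;X|Y)$ holds, giving $B\le A$ after minimizing over $U$. Combining the two inequalities proves the lemma.

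The hard part will be exactly this last feasibility check—showing $\phi(u)$ is an independent set—because it is the only place where the two Markov conditions must be used simultaneously and where the combinatorial definition of the edge set of $G$ (recording inputs on which $p_{Z|XY}$ differs) has to be matched against the information-theoretic constraints. The surrounding arguments (the joint-distribution factorization, the inclusion $X\in W$, and the data-processing step) are routine once the independent-set property is established.
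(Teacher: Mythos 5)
Your proposal is correct and follows essentially the same route as the paper: the direction $A\le B$ by taking $U=W$ and using the independent-set property to get $Z-(W,Y)-X$, and the direction $B\le A$ by defining $W$ as the support set $\{x:p(u|x)>0\}$ and combining the two Markov chains (exactly the paper's two expansions of $p_{UZ|XY}$) to show this set is independent in $G$. The only cosmetic difference is that you invoke data processing explicitly for $I(W;X|Y)\le I(U;X|Y)$, which the paper leaves implicit.
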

\begin{proof}
$\leq:$ Suppose $p_{W|X}$ achieve the minimum on the right hand side such that $W\in\varGamma(G)$ and $X\in W$. 
Consider any independent set, say $w\in G$. By definition of $G$, for all $(y,z)\in\mathcal{Y}\times\mathcal{Z}$, $p_{Z|X,Y}(z|x,y)$ is the same for all $x\in w$ with $p_{X,Y}(x,y)>0$. 
Hence $p_{Z|XY}(z|x,y)$ can be uniquely determined from $(w,y,z)$ where $x\in w$, and so $Z-(W,Y)-X$ holds. 
Since $p_{W|X}=p_{W|XYZ}$ holds by definition, we have $W-X-(Y,Z)$.
Now the inequality follows by taking $U=W$.

$\geq:$ This is slightly more involved than the similar inequality (corresponding to deterministic functions) proved by Orlitsky and Roche \cite{OrlitskyRoche}, because we are dealing with randomized functions. Suppose $p_{U|X}$ achieve the minimum on the left hand side such that $U-X-(Y,Z)$ and $Z-(U,Y)-X$ hold. 
Now define a random variable $W$ as a function of $U$ in the following way: $w=w(u):=\{x:p_{U,X}(u,x)>0\}$. 
We need to show that the induced conditional distribution $p_{W|X}$ satisfies the following three conditions: 1) the Markov chain $W-X-Y$ holds; 2) $X\in W$, i.e., $p_{X,W}(x,w)>0 \implies x\in w$; and 3) $W$ is an independent set in $G=(X,E)$.

1) The Markov chain $W-X-Y$ holds because $U-X-(Y,Z)$ holds and $W$ is a function of $U$.
2) If $p_{W,X}(w,x)>0$ then $\exists u$ s.t. $w=w(u)$ and $p_{U,X}(u,x)>0$, which implies that $x\in w$.
3) To prove that $w$ is an independent set we suppose, to the contrary, that $w$ is not an independent set, which means that $\exists x,x',u$, where $w=w(u), x,x'\in w$, $\{x,x'\}\in E$, and such that $p_{U|X}(u|x)\cdot p_{U|X}(u|x')>0$. 
By definition of $G$, $\{x,x'\}$ being an edge in $E$ implies that $\exists (y,z)\in\mathcal{Y}\times\mathcal{Z}$ such that $p_{Z|XY}(z|x,y)\neq p_{Z|X,Y}(z|x',y)$ and $p_{XY}(x,y)\cdot p_{XY}(x',y)>0$. 
Consider the above $(u,x,y,z)$. We can expand $p_{UZ|XY}(u,z|x,y)$ in two different ways. The first expansion is as follows:
\begin{align}
p_{UZ|XY}(u,z|x,y) &= p_{Z|XY}(z|x,y)p_{U|XYZ}(u|x,y,z) \nonumber \\
&= p_{Z|XY}(z|x,y)p_{U|X}(u|x), \label{eq:cutset_equality1}
\end{align}
where in the second equality we use $U-X-(Y,Z)$. We can expand $p_{UZ|XY}(u,z|x,y)$ in the following way also:
\begin{align}
p_{UZ|XY}(u,z|x,y) &= p_{U|XY}(u|x,y)p_{Z|U,XY}(z|u,x,y) \nonumber \\
&= p_{U|X}(u|x)p_{Z|UY}(z|u,y), \label{eq:cutset_equality2}
\end{align}
where in the second equality we use $U-X-Y$ and $Z-(U,Y)-X$. 
Now comparing \eqref{eq:cutset_equality1} and \eqref{eq:cutset_equality2} and since $p_{U|X}(u|x)>0$, we get $p_{Z|XY}(z|x,y) = p_{Z|UY}(z|u,y)$.
Applying the same reasoning with $(u,x',y,z)$ we get $p_{Z|XY}(z|x',y) = p_{Z|UY}(z|u,y)$, which leads to the contradiction that $p_{Z|XY}(z|x,y)\neq p_{Z|XY}(z|x',y)$.

\end{proof}
\end{document}